\documentclass[runningheads]{llncs}

\usepackage{graphicx}
\usepackage{amssymb}
\usepackage{mathtools}
\usepackage{braket}
\usepackage{xcolor}
\usepackage{soul}
\usepackage{url}
\usepackage{tensor}
\usepackage{bm}
\usepackage{array}

\newif\ifanonymous
\anonymousfalse     

\begin{document}
\title{Symmetric Cryptography via Dirichlet Convolution}

\ifanonymous
\author{Anonymous}
\authorrunning{Anonymous}
\institute{Affiliation withheld for anonymous review}
\else
\author{
Artur Sowa\inst{1} \and
Elias Hassani Sangani\inst{2} \and
Francis Bui\inst{2} \and
Grant Harris\inst{3} \and
Jonathan Norton\inst{4}
}
\authorrunning{A. Sowa et al.}

\institute{
Department of Mathematics and Statistics,
University of Saskatchewan, Saskatoon, Canada
\and
Department of Electrical and Computer Engineering,
University of Saskatchewan, Saskatoon, Canada
\and
Department of Physics and Engineering Physics,
University of Saskatchewan, Saskatoon, Canada
\and
Department of Surgery,
University of Saskatchewan, Saskatoon, Canada
}
\fi
\date{}
\maketitle
\newtheorem{algorithm}{Algorithm}[section]

\begin{abstract}

We discuss a new symmetric-key cipher for digital data encryption. Its implementation is fast, memory efficient, and resilient against classical, AI-assisted, and quantum attacks.

Let $x$, $k$, and $c$ be elements of a finite abelian group $G$ with operation $+$ and neutral element $0$. Suppose one is given ciphertext $c=x+k$. Retrieving the plaintext $x=c-k$ from the ciphertext is trivial when the key $k$ is known. Without the key, however, recovering $x$ amounts to a blind search. Even if an efficient criterion were available for identifying the plaintext among all group elements, the average number of trials would remain prohibitively large when the group is sufficiently large. The challenge in constructing a practical cryptosystem is therefore to identify a sufficiently rich group $G$, together with efficient algorithms that enable an immersion of \emph{real digital data} into $G$ and efficient evaluation of the group operations.

The proposed cipher is based on generalized Dirichlet convolution over finite fields. We investigate its security against plaintext attacks from an information-theoretic perspective. The nonlocal structure of Dirichlet convolution naturally permits plaintext salting, providing protection against plaintext attacks while allowing controlled key reuse. We establish perfect masking properties over finite fields and prove that the resulting cipher achieves information-theoretic security under key reuse when the underlying arithmetic is carried out in $\mathbb{F}_p$ with $p>2$. For arithmetic over $\mathbb{F}_2$, some information leakage remains, leading to a finite key-reuse budget.

  \vspace{.5cm}


\noindent\emph{Keywords:}
Dirichlet convolution,
symmetric-key cryptography,
information-theoretic security,
plaintext attacks,
post-quantum cryptography.
\end{abstract}

\section{Introduction}

Encryption by means of a group operation,
\[
c=x+k,
\]
as indicated in the abstract, may derive its security from several distinct
principles. First, the key \(k\in G\) may be selected uniformly at random from
the group \(G\). Second, a blind search for the key may require an unrealistic
amount of computational time. Third, if every \(x\in G\) is a possible
plaintext, then decryption without any \emph{a priori} characterization of
\(x\) is equivalent to a blind guess.

These principles are all present in the one-time pad (OTP), where \(G\) is the
direct sum of a large number of copies of \(\mathbb{Z}_2\). The OTP provides
information-theoretic security: if the key is selected uniformly at random,
then
\[
I(x;c)=0,
\]
so that the ciphertext conveys no information about the plaintext. Moreover,
the ciphertext is itself uniformly distributed and therefore
indistinguishable from random noise. The practical limitations of the OTP are
equally well known. The key must be at least as long as the plaintext, and its
length may therefore reveal the size of the message. Furthermore, every
element of \(G\) has order \(2\), so that
\[
c+k=x+k+k=x.
\]
Consequently, reuse of the same key creates immediate vulnerabilities, and in
principle a fresh key must be supplied for every message packet.

In this work, we consider an encryption scheme based on Dirichlet convolution
over finite fields. For simplicity, a substantial portion of the algebraic
development is first carried out in characteristic \(2\), although the
principal information-theoretic results concern arithmetic in
\(\mathbb{F}_p\) for \(p>2\). There is a strong incentive to work in odd prime
characteristic: the encryption key may then be selected at random subject to
the requirement that all of its coefficients be nonzero, a clear
impossibility over \(\mathbb{F}_2\).

The proposed method retains certain features of the OTP. In characteristic
\(2\), the underlying group again consists of arrays of bits, say of length
\(N\), and the encryption operation
\[
\zeta=\kappa*\epsilon
\]
is abelian; see~\eqref{DCMpEncrypt}. The crucial difference is that Dirichlet
convolution acts nonlocally. A ciphertext coordinate generally depends on
several coordinates of both the plaintext and the key, according to the
divisor structure of its index. This nonlocality permits the plaintext to be
concatenated with randomly generated components, which we call
\emph{salting}. Randomness introduced in the leading coordinates can then
propagate through the convolution into later ciphertext coordinates.

This mechanism leads to the first principal theme of the paper:
information-theoretic masking. Over \(\mathbb{F}_p\), with \(p>2\), and under
the natural assumption that the key coefficients are nonzero, sufficiently
salted ciphertext coordinates are uniformly distributed and statistically
independent of the plaintext. In particular, their mutual information with
the plaintext vanishes. The principal result is stated in
Theorem~\ref{Affine_Uniformity}. Remarkably, this masking persists under key
reuse, provided that fresh salt is supplied for each encryption. The binary
case is exceptional: over \(\mathbb{F}_2\), the masking terms need not be
perfectly uniform, and repeated use of the same key may produce a gradual
accumulation of information.

The linearity of the encryption transformation in the plaintext also raises
an immediate cryptanalytic question. One might naturally suspect that an
adversary who can substitute plaintexts and observe the corresponding
ciphertexts could use carefully chosen probe vectors, analogous to Dirac
delta inputs, to reveal the underlying convolution operator and ultimately
the secret key. The second principal theme of the paper is a detailed
analysis of this apparent vulnerability.

Plaintext substitution does indeed expose a rich affine structure in the
ciphertext space and gives rise to a repeated-input key-recovery attack.
Nevertheless, the attack encounters intrinsic arithmetic blind spots. We
identify a sharp phase transition at the threshold
\[
N=2k,
\]
where \(k\) is the length of the random salt. In the regime \(N\ge 2k\), all
leading key coefficients are recoverable from the repeated-input equations.
When \(N<2k\), certain coefficients become invisible because none of their
multiples occurs in the relevant observation window. Subsection~\ref{section_invisibility} develops an arithmetic characterization of the visible and invisible indices in terms of divisibility and proves that the invisible coefficients are precisely the free variables of the recovery system. With suitable parameter choices, the resulting ambiguity remains sufficiently large after any practically realizable amount of key reuse to render complete key recovery by this class of attacks impractical.

Section~\ref{Section_Cycles} develops the algebraic framework used throughout the paper. While the group-theoretic arguments are elementary, they provide a self-contained treatment of the finite Dirichlet group, establish the notation adopted throughout the paper, and collect structural results for which we have not found a convenient reference. In particular, we determine the decomposition of the finite Dirichlet group into cyclic components, culminating in Propositions~\ref{Prop_cycles} and~\ref{Prop_cycles_p}, corresponding to the binary and general prime-characteristic cases, respectively. In contrast with the binary OTP group, whose nonzero elements all have order \(2\), the orders of Dirichlet group elements may be substantially larger and grow with the truncation parameter \(N\).

Following the algebraic preliminaries in
Section~\ref{Section_Cycles}, Section~\ref{Section_DCM}
introduces encryption by Dirichlet convolution together with the
associated salting mechanism. The subsequent sections develop the
information-theoretic and cryptanalytic aspects of the construction,
while the final sections specialize the theory to digital image
encryption over the finite field \(\mathbb{F}_{257}\).

The present approach is fundamentally information-theoretic rather than computational. Following Shannon's formulation of perfect secrecy \cite{Shannon1949}, the objective is to eliminate exploitable statistical dependence between the plaintext and the ciphertext, rather than to rely solely on the computational difficulty of recovering such dependence. Consequently, whenever the ciphertext distribution is proved to be independent of the plaintext, no subsequent classical, quantum, or AI-assisted algorithm can recover information that is not statistically present. The computational aspects of the proposed scheme are therefore complemented by rigorous information-theoretic guarantees.

\section{The finite Dirichlet algebra over $\mathbb{F}_p$}\label{Section_Cycles}

For clarity, we begin with the special case \(p=2\), since the extension to an arbitrary prime field is straightforward. The general result is stated in Subsection~\ref{subection_cycle_p}. 

Consider \[
\mathcal{L}_{N} = \{ \bm{\epsilon} = [\epsilon_1, \epsilon_2, \epsilon_3, \ldots, \epsilon_N]:\quad  \epsilon_j \in \mathbb{F}_2 \,\,
\mbox{ for all } j \},
\] 
i.e. the set of all $N$ bit binary sequences $[\epsilon_1, \epsilon_2, \epsilon_3, \ldots, \epsilon_N]$. Clearly, $\mathcal{L}_{N}$   is a linear space over the field $\mathbb{F}_2$ when endowed with the usual bit-wise addition modulo 2 (XOR) and multiplication by a scalar (zeros or one). In addition, we introduce an operation of multiplication in $\mathcal{L}_{N}$ via Dirichlet convolution modulo 2 (DCM2). Namely, for any $\bm{\alpha}, \bm{\beta} \in \mathcal{L}_{N}$ we define $\bm{\gamma} = \bm{\alpha}*\bm{\beta}$ as follows:
\begin{equation}\label{mult_eps}
  \gamma_n = \sum_{d|n} \alpha_{d} \beta_{n/d}\,\, \pmod{2}, \quad n = 1, 2,\ldots N.
\end{equation}
Clearly, the multiplication $*$ is commutative. 
This makes $\mathcal{L}_{N}$  into an algebra over $\mathbb{F}_2$. The group of this algebra's invertible elements (with respect to $*$) is 
 \[
\mathcal{L}^*_{N} = \{ \bm{\epsilon} = [1, \epsilon_2, \epsilon_3, \ldots, \epsilon_N]:\quad  \epsilon_j \in \mathbb{F}_2 \,\,
\mbox{ for all } j \geq 2 \}.
\] 
Indeed,  for $\bm{\alpha} \in \mathcal{L}^*_{N}$ its inverse $\bm{\beta} = \bm{\alpha}^{*-1}$ has $\beta_1 = 1$ and remaining bits determined via the recurrence: 
\begin{equation}\label{mult_eps_inv}
  \beta_n = \sum_{d|n: d<n} \beta_{d} \alpha_{n/d}\,\, \pmod{2}\quad \mbox{ for all } n>1.
\end{equation}
Note that $\dim \mathcal{L}_{N} = 2^N$, so that in the sequence of algebras 
\begin{equation}\label{alebra_seq}
  \mathcal{L}_{2} \subset \mathcal{L}_{3} \subset \mathcal{L}_{4} \subset \ldots,
\end{equation}
the dimension doubles at each step.  
Also,
\begin{equation}\label{group_seq}
  \mathcal{L}^*_{2} \subset \mathcal{L}^*_{3} \subset \mathcal{L}^*_{4} \subset \ldots,
\end{equation}
and the group order doubles at every step, as for $\mathcal{L}^*_{N}$ it is $2^{N-1}$.

\subsection{The structure of $\mathcal{L}^*_{N}$}

 It is a general fact that every finite abelian group can be expressed as a direct sum of cyclic groups of prime power order. We will be more specific for $\mathcal{L}^*_{N}$. To this end, we utilize the canonical basis of the ambient algebra. Namely, let
 \[
\bm{e}_k= 
\begin{bmatrix}
0 & 0 & \cdots & 0 & \underset{k^{\text{th}}}{1} & 0 & \cdots & 0
\end{bmatrix}.
\]
This enables us to write all elements of the group, in the form:
\begin{equation}\label{delta_basis}
  \bm{\epsilon} = \bm{e}_1  + \sum_{k=2}^{N} \epsilon_k \bm{e}_k .
\end{equation}
It follows directly from (\ref{mult_eps}) that
\begin{equation}\label{mult_b}
  \bm{e}_k*\bm{e}_l = \left\{
\begin{array}{ll}
\bm{e}_{kl}  & \text{if } kl\leq N \\
0 & \text{if } kl > N
\end{array}
\right.
\end{equation}
The elements
\[
 \bm{e}_1 + \bm{e}_p \quad \mbox{ where } p \leq N \mbox{ is prime}
\]
generate a subgroup of $\mathcal{L}^*_{N}$. We will refer to these elements as prime generators. 
The order of $\bm{e}_1 + \bm{e}_p$, denoted $m_p(N)$, is the smallest positive integer $m$, such that  
\begin{equation}\label{delta2_order}
  (\bm{e}_1 + \bm{e}_p) ^{*m} = \bm{e}_1 \quad \mbox{ in } \mathcal{L}^*_{N} . 
\end{equation}
Note that (\ref{mult_b}) implies a binomial formula:
\[
(\bm{e}_1 + \bm{e}_p) ^{*m} = \bm{e}_1 + \sum_{k=1}^{m-1} \binom{m}{k} \bm{e}_{p ^k}+ \bm{e}_{p ^{m}} \quad \pmod{2}.
\]
Therefore, for (\ref{delta2_order}) to hold, one must have $p^m >N$, i.e. $m > log_p(N)$. Also, all the coefficients $\binom{m}{k}$ listed above must be even, which implies that $m$ must be a power of $2$. Thus, $m_p(N)$ is the smallest power of $2$ exceeding $\log_pN$, i.e.  
\begin{equation}\label{orderm}
  m_p(N) = 2^{\lfloor \log_2\log_p(N) \rfloor +1 }.
\end{equation}
Note that the order of elements $\bm{e}_1 + \bm{e}_p$ decreases with increasing $p$. Here is a complete table for $N=256$: 
\[
\begin{array}{|c|l|}
\hline
m_p & \text{Primes } p < 256  \\
\hline
16 & 2 \\
8  & 3 \\
4  & 5\text{--}13 \\
2  & 17\text{--}251 \\
\hline
\end{array}
\]
Utilizing this list of generators, we obtain subgroups:
  \begin{equation}\label{DN_alg_structure}
 \bigoplus\limits_{p: p\leq N} \mathbb{Z}_{m_p(N)} \lhd  \mathcal{L}^*_{N} .
\end{equation} 
Note that, depending on $N$, the elements $\bm{e}_1 + \bm{e}_p$ need not be a complete set of generators. It is instructive to consider a few examples: 

\begin{table}[ht!]
\centering
\begin{tabular}{|l|l|l|}
\hline
\textbf{\( N \)} & \textbf{ \( \mathcal{L}_N^* \)} & \textbf{Generators (p), \textbf{(n)}} \\
\hline
2 & \( \mathbb{Z}_2 \) & 2 \\
3 & \( \mathbb{Z}_2 \oplus \mathbb{Z}_2 \) & 2, 3 \\
4 & \( \mathbb{Z}_4 \oplus \mathbb{Z}_2 \) & 2, 3 \\
5 & \( \mathbb{Z}_4 \oplus \mathbb{Z}_2 \oplus \mathbb{Z}_2 \) & 2, 3, 5 \\
6 & \( \mathbb{Z}_4 \oplus \mathbb{Z}_2^{\oplus 3} \) & 2, 3, 5, \textbf{6} \\
7 & \( \mathbb{Z}_4 \oplus \mathbb{Z}_2^{\oplus 4} \) & 2, 3, 5, \textbf{6}, 7 \\
8 & \( \mathbb{Z}_4 \oplus \mathbb{Z}_2^{\oplus 5} \) & 2, 3, 5, \textbf{6}, 7 , \textbf{8} \\
\hline
\end{tabular}
\caption{Cyclic decomposition and generators of \( \mathcal{L}_N^* \). Non-prime generators are in bold.}
\label{Table1}
\end{table}

\begin{itemize}
\item
 $ \mathcal{L}^*_{2} = \mathbb{Z}_2$ is generated by the prime generator, $\bm{e}_1 + \bm{e}_2$.
  
  \item
  $ \mathcal{L}^*_{3} = \mathbb{Z}_2 \oplus \mathbb{Z}_2 $ is generated by prime generators, $\bm{e}_1 + \bm{e}_p$,  with $ p = 2, 3$.
    \item
  $ \mathcal{L}^*_{4} = \mathbb{Z}_4 \oplus \mathbb{Z}_2 $ is generated by prime generators, also with $ p = 2, 3$. Note that, even though, cases $N=3$ and $N=4$ have the same generators, the order of one of them increases from $2$ to $4$. 
  \item
  $ \mathcal{L}^*_{5}= \mathbb{Z}_4\oplus\mathbb{Z}_2 \oplus \mathbb{Z}_2 $ is generated by prime generators corresponding to $ p = 2, 3, 5$. In this case, we have a new prime element. 
  \item
  $\mathcal{L}^*_{6}$ is the first case, where the subgroup generated by prime generators is proper, i.e. $\mathcal{L}^*_{6} \rhd \mathbb{Z}_4\oplus\mathbb{Z}_2 \oplus \mathbb{Z}_2 $. Indeed, the list of primes does not change as we increase $N$ from $5$ to $6$. Also, the order of these generators does not increase. However, note that in this case $\bm{e}_1 + \bm{e}_6$ cannot be obtained as a product of prime generators. Indeed, a direct calculation shows that 
 \[
 (\bm{e}_1 + \bm{e}_2)^{*a}*(\bm{e}_1 + \bm{e}_3)  \neq \bm{e}_1 + \bm{e}_6\quad \mbox{ for } a = 1,2,3.
 \]     
 Therefore, to complete the list of generators for this group, we need to supplement the list of generators by $\bm{e}_1 + \bm{e}_6$, which is also an element of order $2$.  
 \item
 For $\mathcal{L}^*_{7}$, we take all the generators of $\mathcal{L}^*_{6}$, and we add one more generator: $\bm{e}_1 + \bm{e}_7$. 
 \item
   For $\mathcal{L}^*_{8}$ we add one more generator: $\bm{e}_1 + \bm{e}_8$. 
   
   These examples are summarized in Table \ref{Table1}.
   \end{itemize}
      \vspace{.2cm}
   
   The order of generators that are not prime is given by a formula analogous to (\ref{orderm}), namely:
   \begin{equation}\label{ordermn}
  m_n(N) = 2^{\lfloor \log_2\log_n(N) \rfloor +1 }.
\end{equation}
While the set of generators of a group is not unique, we construct specific sets of generators $\mbox{Gen}(N)$  as follows:  First, $\mbox{Gen}(2) = \{ \bm{e}_1 + \bm{e}_2\}$. The consecutive $\mbox{Gen}(N)$ are defined inductively, accounting for two mutually exclusive scenarios, as follows:
   \begin{enumerate}
     \item 
     
     If the order of one of the elements of $\mbox{Gen}(N)$ doubles in $\mathcal{L}^*_{N+1}$ relative to $\mathcal{L}^*_{N}$, we set $\mbox{Gen} (N+1) = \mbox{Gen}(N)$. In other words, the set of generators does not change. 
     
     \item 
     In the alternative case, when none of the elements of $\mbox{Gen}(N)$ double their order, we set $\mbox{Gen} (N+1) = \mbox{Gen}(N) \cup \{ \bm{e}_1 + \bm{e}_{N+1}\}$. Note that the element $\bm{e}_1 + \bm{e}_{N+1}$ has order $2$ in $\mathcal{L}^*_{N+1}$.  
    
   \end{enumerate}
Note that the first case scenario occurs when, for a certain generator $\bm{e}_1 + \bm{e}_{n} \in \mbox{Gen}(N)$, one has 
\[
m_n(N+1) = 2 m_n(N), \mbox{ equivalently }\quad \lfloor \log_2\log_n(N+1) \rfloor = \lfloor \log_2\log_n(N) \rfloor +1.
\]
This occurs precisely when $\log_2\log_n(N+1)$ is an integer, i.e. 
$N+1 = n^{2^a}$ for a certain integer $a \geq 1$, i.e. 
$N+1$ is a square. In summary, we have demonstrated the following: 
\begin{proposition}[Cyclic Decomposition]\label{Prop_cycles}
The group $\mathcal{L}^*_{N}$ is generated by the set of its elements: 
\[
\mbox{Gen}(N) = \{\bm{e}_1 + \bm{e}_{n}: 2\leq n\leq N \mbox{ and } n \mbox{ is not a square }\}
\]
The order of $\bm{e}_1 + \bm{e}_{n}$ is given by (\ref{ordermn}). Finally,
 \[
 \mathcal{L}^*_{N} = \bigoplus\limits_{n\in [2,N]: \sqrt{n}\notin \mathbb{N} } \mathbb{Z}_{m_n(N)},  
\]
(i.e., the direct sum is over all those $n\in [2,N]$ which are not a square.) 
\end{proposition}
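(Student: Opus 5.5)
The plan is to build a surjective homomorphism from the abstract direct sum onto $\mathcal{L}^*_N$ and then compare orders. The argument has three steps: compute the orders of the elements $\bm{e}_1+\bm{e}_n$ exactly, show these elements generate $\mathcal{L}^*_N$, and check that the orders of the two groups agree.

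\emph{Step 1 (orders).} Since $*$ is commutative and we work in characteristic $2$, the Frobenius identity $(\bm{\alpha}+\bm{\beta})^{*2}=\bm{\alpha}^{*2}+\bm{\beta}^{*2}$ holds. Together with (\ref{mult_b}) it gives
\[
(\bm{e}_1+\bm{e}_n)^{*2^b}=\bm{e}_1+\bm{e}_{n^{2^b}},
\]
where the second term is read as $0$ when $n^{2^b}>N$. Since $\mathcal{L}^*_N$ has order $2^{N-1}$, every element has order a power of $2$. Hence the order of $\bm{e}_1+\bm{e}_n$ is the least $2^b$ with $n^{2^b}>N$, which is exactly $m_n(N)$ as in (\ref{ordermn}). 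Consequently the assignment sending the generator of $\mathbb{Z}_{m_n(N)}$ to $\bm{e}_1+\bm{e}_n$ defines a homomorphism
\[
\Phi:\bigoplus_{n\in[2,N],\,\sqrt n\notin\mathbb{N}}\mathbb{Z}_{m_n(N)}\to\mathcal{L}^*_N .
\]

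\emph{Step 2 (surjectivity).} For $\bm{\epsilon}\ne\bm{e}_1$ in $\mathcal{L}^*_N$, let $j\ge 2$ be the smallest index with $\epsilon_j=1$. Every $j\ge2$ can be written uniquely as $j=n^{2^a}$ with $n\ge2$ not a square: repeatedly take square roots while the number remains a perfect square. By Step 1, $\bm{g}=(\bm{e}_1+\bm{e}_n)^{*2^a}=\bm{e}_1+\bm{e}_j$ lies in the image of $\Phi$. Its inverse is $\bm{e}_1+\bm{e}_j+(\text{terms with indices }>j)$, by the recurrence (\ref{mult_eps_inv}). In $\bm{\epsilon}*\bm{g}^{*-1}$, products $\bm{e}_k*\bm{e}_l$ with $k,l\ge2$ land at indices $kl$ exceeding both $k$ and $l$. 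So all coordinates below $j$ stay zero, and the $j$-th coordinate becomes $\epsilon_j+1=0$. The smallest nonzero index beyond $1$ therefore strictly increases. Induction on this index, terminating at $\bm{e}_1$, shows that $\bm{\epsilon}$ is a product of elements in the image of $\Phi$.

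\emph{Step 3 (counting).} This is the step I expect to need the most care, since it is where the condition ``not a square'' enters. By the unique decomposition $k=n^{2^a}$ from Step 2, the set $\{2,\dots,N\}$ is partitioned by the non-square bases $n$. For a fixed $n$, the number of $a\ge0$ with $n^{2^a}\le N$ equals $\lfloor\log_2\log_n N\rfloor+1=\log_2 m_n(N)$. Summing over $n$ gives
\[
\sum_{n\ \text{non-square}}\log_2 m_n(N)=N-1,
\]
so the direct sum has order $2^{N-1}=|\mathcal{L}^*_N|$. A surjective homomorphism between finite groups of equal order is an isomorphism. This yields both the generation statement and the decomposition $\mathcal{L}^*_N\cong\bigoplus\mathbb{Z}_{m_n(N)}$ asserted in Proposition~\ref{Prop_cycles}.
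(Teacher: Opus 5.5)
Your proof is correct, and it takes a different route from the paper's.

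\textbf{The paper's route.} The paper argues inductively in $N$. Using $|\mathcal{L}^*_{N+1}|=2|\mathcal{L}^*_N|$, it builds $\mathrm{Gen}(N)$ step by step. Passing from $N$ to $N+1$ either doubles the order of an existing generator $\bm{e}_1+\bm{e}_n$, which happens exactly when $N+1=n^{2^a}$ with $a\ge1$ (i.e.\ $N+1$ is a square), or adjoins the new order-$2$ element $\bm{e}_1+\bm{e}_{N+1}$. The orders come from the binomial expansion. Generation and directness are read off from this bookkeeping and the small-$N$ examples, not checked separately.

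\textbf{Your route.} You work at a fixed $N$. Frobenius plus Lagrange gives the orders exactly. A leading-index elimination proves generation. The partition of $\{2,\dots,N\}$ into chains $\{n^{2^a}\}$ gives $\prod_n m_n(N)=2^{N-1}$, so the surjection $\Phi$ is an isomorphism. Both arguments rest on the same arithmetic fact: the unique representation $j=n^{2^a}$ with $n$ not a square. In the paper this fact appears as the criterion for when $\mathrm{Gen}$ grows.

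\textbf{What your route buys.} It gives genuine proofs of generation and of directness. It also avoids the paper's tacit use of the chain $\mathcal{L}^*_N\subset\mathcal{L}^*_{N+1}$, which is not a chain of subgroups: $\bm{e}_1+\bm{e}_2$ has order $2$ in $\mathcal{L}^*_3$ but order $4$ in $\mathcal{L}^*_4$. A rigorous version of the paper's induction would have to pass through the truncation epimorphism $\mathcal{L}^*_{N+1}\to\mathcal{L}^*_N$, whose kernel is $\{\bm{e}_1,\bm{e}_1+\bm{e}_{N+1}\}$. Your Lagrange step also bypasses the paper's demand that all listed binomial coefficients be even. That demand is stronger than needed, since terms with $p^k>N$ are truncated anyway.

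Your argument adapts directly to characteristic $p$ for the subgroup $\{\epsilon_1=1\}$. Its counting step then shows that the full group $\mathcal{L}^{(p)*}_N$, of order $(p-1)p^{N-1}$, needs an extra $\mathbb{Z}_{p-1}$ summand coming from $\epsilon_1$.

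\textbf{What the paper's route buys.} It shows how the decomposition evolves as $N$ grows, which is what Table~\ref{Table1} records.
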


\subsection{Generalization to Prime Characteristic}
\label{subection_cycle_p}

The preceding arguments extend with only minor modifications to
Dirichlet convolution modulo a prime \(p\). In this setting,
\(\bm{\gamma}=\bm{\alpha}*\bm{\beta}\) is defined by
\begin{equation}\label{mult_eps_gen_p}
  \gamma_n=\sum_{d\mid n}\alpha_d\beta_{n/d}\pmod p,
  \qquad n=1,2,\ldots,N,
\end{equation}
and the group of invertible elements is
\[
\mathcal{L}^{(p)*}_{N}
=
\left\{
\bm{\epsilon}
=
[\epsilon_1,\epsilon_2,\ldots,\epsilon_N]:
\epsilon_j\in\mathbb{F}_p
\text{ for all }j,
\;
\epsilon_1\neq0
\right\}.
\]

Repeating the proof of Proposition~\ref{Prop_cycles} in this setting
yields the following generalization.

\begin{proposition}[Generalized Cyclic Decomposition]\label{Prop_cycles_p}
The group $\mathcal{L}^{(p)*}_{N}$ under Dirichlet convolution modulo $p$ is generated by the set:
\[
\text{Gen}_p(N) = \left\{ \bm{e}_1 + \bm{e}_n : 2 \leq n \leq N, \text{ and } n \mbox{ is not a } p\mbox{-th power}  \right\}.
\]
Each generator \( \bm{e}_1 + \bm{e}_n \) has order:
\[
m_n^{(p)}(N) = p^{\lfloor \log_p \log_n(N) \rfloor + 1}.
\]
Therefore, the group decomposes as:
\[
\mathcal{L}^{(p)*}_{N} \cong \bigoplus_{n \in [2,N], \sqrt[p]{n}\notin \mathbb{N} } \mathbb{Z}_{m_n^{(p)}(N)}.
\]
\end{proposition}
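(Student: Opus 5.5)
The plan is to treat the group as a direct product of scalars and a unipotent part, and then prove the cyclic decomposition for the unipotent part by a counting argument.

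\emph{Splitting off the scalars.} Write $\mathcal{U}^{(p)}_N$ for the subgroup of $\mathcal{L}^{(p)*}_{N}$ consisting of elements with $\epsilon_1=1$. Scalars $c\bm{e}_1$ with $c\in\mathbb{F}_p^\times$ are central, and every element factors uniquely as $(\epsilon_1\bm{e}_1)*\bigl(\epsilon_1^{-1}\bm{\epsilon}\bigr)$. Hence
\[
\mathcal{L}^{(p)*}_{N}\cong \mathbb{Z}_{p-1}\oplus\mathcal{U}^{(p)}_N .
\]
For $p=2$ the first factor is trivial and we recover the setting of Proposition~\ref{Prop_cycles}. For $p>2$ the statement as printed must be read as describing $\mathcal{U}^{(p)}_N$, or else be supplemented by this $\mathbb{Z}_{p-1}$ factor. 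I will point this out and prove the claim for $\mathcal{U}^{(p)}_N$, which has order $p^{N-1}$.

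\emph{Order of each generator.} The Dirichlet algebra is commutative of characteristic $p$, so the Frobenius map $x\mapsto x^{*p}$ is additive. Together with (\ref{mult_b}) this gives
\[
(\bm{e}_1+\bm{e}_n)^{*p^a}=\bm{e}_1+\bm{e}_{n^{p^a}},
\]
where the last term is read as $0$ when $n^{p^a}>N$. The order of $\bm{e}_1+\bm{e}_n$ is a power of $p$, since $\mathcal{U}^{(p)}_N$ is a $p$-group. It is therefore the least $p^a$ with $n^{p^a}>N$, which is exactly $m_n^{(p)}(N)=p^{\lfloor\log_p\log_n N\rfloor+1}$.

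\emph{Counting.} Every integer $m\in[2,N]$ can be written uniquely as $m=n^{p^a}$ with $a\ge0$ and $n$ not a $p$-th power: take $a$ maximal. For a fixed such $n$, the exponents $a$ with $n^{p^a}\le N$ are $a=0,\dots,\lfloor\log_p\log_n N\rfloor$. Their number is $\log_p m_n^{(p)}(N)$. Summing over $n$ gives
\[
\prod_n m_n^{(p)}(N)=p^{N-1}=|\mathcal{U}^{(p)}_N| .
\]
So the natural homomorphism
\[
\Phi:\bigoplus_n \mathbb{Z}_{m_n^{(p)}(N)}\to\mathcal{U}^{(p)}_N,\qquad (k_n)\mapsto \mathop{\ast}_n(\bm{e}_1+\bm{e}_n)^{*k_n},
\]
is well defined by the previous step, and it is an isomorphism as soon as it is surjective.

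\emph{Surjectivity (main step).} I will prove this by induction on the leading index. Given $\bm{u}\in\mathcal{U}^{(p)}_N$ with $\bm{u}\ne\bm{e}_1$, let $m>1$ be the smallest index with $u_m=c\neq0$, and write $m=n^{p^a}$ as above. Then
\[
(\bm{e}_1+\bm{e}_n)^{*cp^a}=(\bm{e}_1+\bm{e}_m)^{*c}=\bm{e}_1+c\,\bm{e}_m+(\text{terms with indices } m^j>m).
\]
Multiplying $\bm{u}$ by its inverse therefore kills the coefficient at $m$. Because convolution with an element of the form $\bm{e}_1+(\text{terms at indices}>m)$ leaves all coordinates with index below $m$ unchanged, the leading nonzero index strictly increases. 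After finitely many steps we reach $\bm{e}_1$, so $\bm{u}$ lies in the image of $\Phi$.

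The only delicate points are the Frobenius identity and the triangularity claim that multiplying by such an element changes no coordinate with index $<m$; both follow from (\ref{mult_b}). Combined with the count, surjectivity gives the stated direct-sum decomposition.
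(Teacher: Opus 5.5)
Your proof is correct, and you are right to flag the statement. Every element of $\text{Gen}_p(N)$ has first coordinate $1$, so for $p>2$ these elements generate only the index-$(p-1)$ subgroup $\mathcal{U}^{(p)}_N$. The correct result is therefore $\mathcal{L}^{(p)*}_N\cong\mathbb{Z}_{p-1}\oplus\bigoplus_n\mathbb{Z}_{m_n^{(p)}(N)}$. The paper's example with $N=535$, $p=257$ inherits the omission: by your count it should read $\mathbb{Z}_{256}\oplus\mathbb{Z}_{257}^{\oplus 534}$.

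The paper's proof is to ``repeat'' the binary argument, which is an induction on $N$. Orders come from divisibility by $p$ of the coefficients in $(\bm{e}_1+\bm{e}_n)^{*m}=\sum_j\binom{m}{j}\bm{e}_{n^j}$. In passing from $N$ to $N+1$, either $N+1=n^{p^a}$ with $a\ge 1$ and one existing generator has its order multiplied by $p$, or $\bm{e}_1+\bm{e}_{N+1}$ is adjoined as a new generator of order $p$. Since $\epsilon_1=1$ is automatic over $\mathbb{F}_2$, transplanting that argument verbatim is exactly what loses the scalar factor.

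You work at fixed $N$ instead. Frobenius gives the orders at once, and the unique representation $m=n^{p^a}$ gives $\prod_n m_n^{(p)}(N)=p^{N-1}$. Leading-index elimination gives surjectivity, and injectivity follows by counting. The inductive route has the merit of showing how the decomposition evolves with $N$ (as in Table~\ref{Table1}). As written, however, it leaves implicit why the updated set still generates and why the sum stays direct. Making it rigorous requires noting that truncation $\mathcal{U}^{(p)}_{N+1}\to\mathcal{U}^{(p)}_N$ has kernel $\{\bm{e}_1+c\,\bm{e}_{N+1}:c\in\mathbb{F}_p\}$, and that this kernel is reached in both cases. That uses the same identity $(\bm{e}_1+\bm{e}_n)^{*p^a}=\bm{e}_1+\bm{e}_{n^{p^a}}$ that drives your elimination step, and your version makes these verifications explicit.

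One small wording fix: the factor $(\bm{e}_1+\bm{e}_m)^{*-c}$ you multiply by contains $-c\,\bm{e}_m$ at index $m$ itself. So the triangularity to quote is for elements $\bm{e}_1+(\text{terms at indices}\ge m)$, not $>m$. Such elements fix every coordinate below $m$, and since $u_j=0$ for $1<j<m$, they change $u_m$ by exactly $-c$.
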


\begin{example}
For digital image encryption, it is natural to take \(N=535\) and
\(p=257\). In this case, the group of invertible elements is
isomorphic to the direct sum of \(512\) copies of the cyclic group
\(\mathbb{Z}_{257}\), namely,
\[
\mathcal{L}^{(257)*}_{535}
\cong
\left(\mathbb{Z}_{257}\right)^{\oplus512}.
\]
Thus, every nonidentity element of this group has order \(257\).
\end{example}

\section{Dirichlet Convolution as an Encryption Primitive}
\label{Section_DCM}

Having developed the underlying algebraic framework, we now introduce
Dirichlet convolution as an encryption primitive. We describe the basic
encryption and decryption operations, explain the role of salting, and
outline the cryptographic principles that motivate the subsequent
information-theoretic and cryptanalytic analyses.

Throughout this and the following sections, the construction is formulated
over the finite field $\mathbb{F}_p$, where $p$ is an arbitrary prime,
unless the binary case $p=2$ is explicitly specified. Accordingly,
plaintexts, ciphertexts, and keys are regarded as vectors over
$\mathbb{F}_p$, with the binary setting treated as a special case.
To simplify notation, vectors will henceforth be denoted by ordinary Greek letters rather than boldface symbols.

An element
\(
\kappa\in\mathcal{L}^{(p)*}_N
\)
is selected uniformly at random and serves as the secret key.
Throughout, we assume that all random variables are drawn independently
from the uniform distribution.
Plaintexts are represented by vectors
\(
\epsilon\in\mathcal{L}^{(p)}_N.
\)
Encryption is defined by
\begin{equation}\label{DCMpEncrypt}
\epsilon\longmapsto
\kappa *\epsilon
=:
\zeta,
\qquad\mbox{(Encryption)}
\end{equation}
producing the ciphertext $\zeta$.
Decryption is performed by applying the Dirichlet inverse of the key,
\begin{equation}\label{DCMpDecrypt}
\zeta\longmapsto
\kappa^{*-1}*\zeta
=
\epsilon,
\qquad\mbox{(Decryption).}
\end{equation}

For many of the subsequent arguments, it is convenient to represent Dirichlet convolution by the associated lower triangular matrix. Accordingly, we write
\[
\zeta
=
D_{\kappa}\epsilon,
\]
where \(D_{\kappa}\) is the lower triangular matrix defined by
\begin{equation}\label{D_matrix}
  (D_{\kappa})_{ij}
=
\begin{cases}
\kappa_{i/j},
&
j\mid i,
\\
0,
&
j\nmid i.
\end{cases}
\end{equation}
Since \(\kappa_1\neq0\), the matrix \(D_{\kappa}\) is invertible, and
decryption is equivalently expressed as
\[
\epsilon
=
D_{\kappa}^{-1}\zeta.
\]

This construction resembles the one-time pad (OTP), but differs in one
fundamental respect. Whereas the OTP encrypts each coordinate
independently, Dirichlet convolution is inherently nonlocal: each
ciphertext coordinate typically depends on several coordinates of both
the plaintext and the key. This nonlocality naturally enables
\emph{salting}, whereby the message is preceded by randomly generated
coordinates. Specifically, we consider plaintexts of the form
\[
[\mu_1,\mu_2,\ldots,\mu_k,
\epsilon_1,\epsilon_2,\ldots,\epsilon_{N-k}],
\]
where the $\mu_j$ are independent uniformly distributed random elements
of $\mathbb{F}_p$, while the $\epsilon_j$ carry the information.
During encryption, the random salt is mixed with the plaintext through
Dirichlet convolution. After decryption, the salt is simply discarded.

Salting produces a different ciphertext each time, even when the same
secret key is reused with fresh random salt. Consequently, the key need
not match the length of the plaintext. For example, an image may be
encrypted by applying Dirichlet convolution successively to its rows and
columns while reusing the same key together with independently generated
salt for each encryption.

The linearity of Dirichlet convolution also suggests a natural
cryptanalytic attack. Indeed, if an adversary knew both the complete
plaintext $\epsilon$ and the corresponding ciphertext
$\bm{\zeta}$, then the key could be recovered by
\[
\kappa
=
\epsilon^{*-1}*\zeta,
\]
provided that $\epsilon$ is invertible. Salting removes precisely
the information required to perform this inversion, since the attacker
has no access to the leading random coordinates. At first sight, one
might nevertheless attempt an exhaustive search over the missing random
coordinates, assuming that the underlying plaintext could subsequently
be recognized. The extent to which repeated-input attacks of this kind
are frustrated by the resulting uncertainty is the subject of the
following sections.

We begin by analyzing the information-theoretic masking provided by
salting.

\section{Uniformity and Masking}

The security of the proposed encryption scheme derives from two distinct
sources of randomness. The first is the initial random selection of the
secret key, while the second is the fresh random salt introduced before
each encryption. Although both mechanisms induce uniformity properties,
they correspond to different probabilistic models and play complementary
roles in the cryptographic design.

The first model considers a uniformly distributed secret key and fixed
plaintext. In this setting, the encryption map acts as a bijection and
therefore preserves the uniform distribution. The second model, which
forms the foundation of the proposed encryption architecture, keeps the
secret key fixed while introducing fresh independent salt variables for
each encryption. The principal objective of this section is to show that
the randomness carried by the salt propagates through Dirichlet
convolution to produce strong masking of the plaintext.

\subsection{Uniformity Induced by a Fresh Random Key}

The following result describes the statistical behaviour of the
ciphertext when the secret key is selected uniformly at random. It
provides a useful reference point for the analysis of the fixed-key
setting considered subsequently.

\begin{theorem}[Global Uniformity under a Uniform Key]
\label{Global_Key_Uniformity}

Let
$
\alpha\in\mathbb F_p^N
$
satisfy
$
\alpha_1\neq0.
$
If the key
$
\kappa
$
is uniformly distributed in $\mathbb F_p^N$, then
\[
\zeta=\kappa*\alpha
\]
is uniformly distributed in $\mathbb F_p^N$.

If, instead, $\kappa$ is uniformly distributed in
$
\mathcal{L}^*_{N}
=
\{\kappa\in\mathbb F_p^N:\kappa_1\neq0\},
$
then $\zeta$ is uniformly distributed in $\mathcal{L}^*_{N}$.

\end{theorem}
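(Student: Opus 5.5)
The plan is to show that, for fixed $\alpha$ with $\alpha_1\neq0$, the map $\kappa\mapsto\kappa*\alpha$ is a bijection of the relevant set onto itself. A bijection pushes the uniform distribution forward to the uniform distribution. The key steps follow.

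First, by commutativity of Dirichlet convolution, $\kappa*\alpha=D_\alpha\kappa$, where $D_\alpha$ is the lower triangular matrix of (\ref{D_matrix}) built from $\alpha$. Its diagonal entries are all $(D_\alpha)_{ii}=\alpha_{1}\neq0$. Hence $\det D_\alpha=\alpha_1^N\neq0$ in $\mathbb F_p$, and $D_\alpha$ is an invertible linear map of $\mathbb F_p^N$. Equivalently, $\alpha\in\mathcal L^{(p)*}_N$ has a Dirichlet inverse, obtained by the recurrence analogous to (\ref{mult_eps_inv}) after dividing by $\alpha_1$. Thus $\kappa=\alpha^{*-1}*\zeta$ is the unique preimage of $\zeta$.

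Second, for the uniform case on $\mathbb F_p^N$, each $z\in\mathbb F_p^N$ has exactly one preimage $\kappa=D_\alpha^{-1}z$. Therefore
\[
\Pr[\zeta=z]=\Pr[\kappa=D_\alpha^{-1}z]=p^{-N}.
\]

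Third, for the restricted case, note that $\zeta_1=\kappa_1\alpha_1$. Since $\alpha_1\neq0$ and $\mathbb F_p$ is a field, $\zeta_1\neq0$ exactly when $\kappa_1\neq0$. So the bijection $D_\alpha$ maps $\mathcal L^*_N$ onto itself; equivalently, left multiplication by a group element permutes the group $\mathcal L^{(p)*}_N$. Uniformity on $\mathcal L^*_N$ is therefore preserved, with each element having probability $1/((p-1)p^{N-1})$.

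There is no real obstacle here. The only point needing care is to check that the unit condition $\alpha_1\neq0$ simultaneously gives invertibility of $D_\alpha$ and preservation of the condition $\kappa_1\neq0$. Both follow from the triangular structure, with diagonal entries equal to $\alpha_1$.
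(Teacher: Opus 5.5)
Your proof is correct and follows essentially the same route as the paper: you write $\kappa*\alpha=D_\alpha\kappa$ with $D_\alpha$ lower triangular and diagonal $\alpha_1\neq0$, conclude that $\kappa\mapsto\kappa*\alpha$ is a bijection of $\mathbb F_p^N$ restricting to a bijection of $\mathcal L^*_N$, and use that bijections preserve uniform distributions. Your explicit check that $\zeta_1=\kappa_1\alpha_1$ simply makes precise the paper's remark that invertibility is preserved under Dirichlet convolution.
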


\begin{proof}
Since Dirichlet convolution is commutative,
$
\zeta
=
\kappa*\alpha
=
D_\alpha\kappa.
$
The matrix $D_\alpha$ is lower triangular with diagonal entries
$\alpha_1$, and is therefore invertible since $\alpha_1\neq0$.
Therefore,
$
\kappa
\longmapsto
\kappa*\alpha
$
is a bijection of $\mathbb F_p^N$. Moreover, because invertibility is
preserved under Dirichlet convolution, this bijection restricts to
$\mathcal{L}^*_{N}$.
Since bijections preserve the uniform distribution on finite sets, the
assertion follows.
\end{proof}

\subsection{Uniformity Induced by Fresh Salts}

We now turn to the probabilistic model that underlies the proposed
encryption scheme. Throughout the remainder of this section, the secret
key
\(
\kappa
\)
is regarded as fixed, while each encryption employs fresh independent
salt variables.

Let the augmented plaintext be written in the form
\[
\alpha
=
\begin{pmatrix}
\mu\\
\epsilon
\end{pmatrix},
\]
where
\(
\mu\in\mathbb F_p^k
\)
contains the random salt, and
\(
\epsilon\in\mathbb F_p^{\,N-k}
\)
contains the information-bearing coordinates.

Partition the columns of the encryption matrix \(D_\kappa\) according to
the decomposition of the input vector into salt and plaintext
coordinates:
\begin{equation}
\label{Block_Decomposition}
D_\kappa
=
\begin{bmatrix}
B & R
\end{bmatrix},
\qquad
B\in M_{N\times k}(\mathbb F_p),\;
R\in M_{N\times(N-k)}(\mathbb F_p).
\end{equation}
The matrix $B$ has full column rank. Indeed, the first $k$ rows of $B$
form a lower triangular $k\times k$ submatrix. Since the diagonal
entries are all equal to $\kappa_1$, its determinant is
$
\kappa_1^k.
$
As $\kappa_1\neq0$ by the invertibility assumption on the key,
it follows that
\begin{equation}\label{rank_B}
\operatorname{rank}(B)=k.
\end{equation}
Consequently, the fresh salt contributes $k= \dim\operatorname{im}(B)$ independent dimensions of randomness to
the ciphertext.
The ciphertext therefore admits the decomposition
\begin{equation}
\label{Affine_Decomposition}
\zeta
=
B\mu
+
R\epsilon,
\end{equation}
which separates the two sources of the ciphertext. The matrix $R$ describes the deterministic contribution of
the information-bearing coordinates, while the matrix $B$ governs the
propagation of the fresh randomness introduced by the salt. The
information-theoretic properties of the encryption scheme are therefore
determined by the interaction of these two components.

The following theorem provides the fundamental probabilistic description
of the ciphertext.
\begin{theorem}[Affine Uniformity of the Ciphertext]
\label{Affine_Uniformity}
Assume that the salt vector
$
\mu
$
is uniformly distributed on $\mathbb F_p^k$ and statistically
independent of the plaintext vector
$\epsilon$.
Then, for every plaintext vector
$
\epsilon=e,
$
the conditional distribution of the ciphertext is supported on the affine
subspace
\[
Re+\operatorname{im}(B),
\]
and is uniform there. More precisely,
\[
\Pr(\zeta=z\mid\epsilon=e)
=
\begin{cases}
p^{-k},
&
z\in Re+\operatorname{im}(B),
\\[2mm]
0,
&
z\notin Re+\operatorname{im}(B).
\end{cases}
\]

\end{theorem}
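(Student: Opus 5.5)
The plan is to condition on the plaintext and reduce everything to the injectivity of $B$ established in \eqref{rank_B}. Fix a plaintext value $e\in\mathbb F_p^{N-k}$. By the decomposition \eqref{Affine_Decomposition}, on the event $\epsilon=e$ we have $\zeta=B\mu+Re$. Because $\mu$ is independent of $\epsilon$, conditioning on $\epsilon=e$ does not change the law of $\mu$. Hence, conditionally on $\epsilon=e$, the salt $\mu$ is still uniform on $\mathbb F_p^k$. It therefore suffices to determine the distribution of the random vector $B\mu+Re$ when $\mu$ is uniform and $e$ is a deterministic vector.

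Next, consider the affine map $T_e:\mathbb F_p^k\to\mathbb F_p^N$ given by $T_e(m)=Bm+Re$. Its image is exactly the affine subspace $Re+\operatorname{im}(B)$. This immediately gives the support statement: if $z\notin Re+\operatorname{im}(B)$, then no salt value produces $z$, so $\Pr(\zeta=z\mid\epsilon=e)=0$.

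For $z$ in the affine subspace, I will use \eqref{rank_B}. The matrix $B$ has full column rank $k$, so $\ker B=\{0\}$ and $T_e$ is injective. Consequently, $T_e$ is a bijection from $\mathbb F_p^k$ onto $Re+\operatorname{im}(B)$. In particular, the affine subspace has exactly $p^k$ elements, and each $z$ in it has exactly one preimage $m_z$. Then
\[
\Pr(\zeta=z\mid\epsilon=e)=\Pr(\mu=m_z)=p^{-k},
\]
which is the claimed formula. Equivalently, a bijection pushes the uniform distribution forward to the uniform distribution, exactly as in the proof of Theorem~\ref{Global_Key_Uniformity}.

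There is no serious obstacle in this argument. The only point needing care is justifying that conditioning on $\epsilon=e$ leaves $\mu$ uniform. This is precisely where the independence hypothesis is used, and it holds whether or not $\epsilon$ itself has a uniform or any other particular distribution. The full column rank of $B$, which rests on $\kappa_1\neq0$, has already been verified in \eqref{rank_B}, so no new computation with the divisor structure of $D_\kappa$ is required.
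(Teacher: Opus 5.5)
Your proposal is correct and follows essentially the same route as the paper: condition on $\epsilon=e$, use the full column rank of $B$ from \eqref{rank_B} to get a bijection $m\mapsto Bm+Re$ from $\mathbb F_p^k$ onto $Re+\operatorname{im}(B)$, and push the uniform law of $\mu$ forward. The only difference is that you state explicitly where the independence of $\mu$ and $\epsilon$ is used, which the paper's proof leaves implicit.
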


\begin{proof}
By the affine decomposition \eqref{Affine_Decomposition}, conditioning
on $\epsilon=e$ leaves $B\mu$ as the random component of the ciphertext,
while $Re$ is fixed. Since $\operatorname{rank}(B)=k$, the map
\(
B:\mathbb F_p^k\to\operatorname{im}(B)
\)
is bijective. Thus $B\mu$ is uniform on $\operatorname{im}(B)$.
Translation by $Re$ then shows that $\zeta$ is uniform on
\(
Re+\operatorname{im}(B).
\)
This affine subspace contains $p^k$ points, each having conditional
probability $p^{-k}$.
\end{proof}

We have the following:

\begin{corollary}
\label{Affine_Entropy}

For every plaintext
$e\in\mathbb F_p^{N-k}$,
the conditional entropy of the ciphertext satisfies
\[
H(\zeta\mid\epsilon=e)
=
k\log p.
\]

\end{corollary}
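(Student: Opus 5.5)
The plan is to read off the conditional entropy directly from the conditional distribution computed in Theorem~\ref{Affine_Uniformity}, so that no new probabilistic input is needed. Fix a plaintext $e\in\mathbb F_p^{N-k}$. By that theorem, conditionally on $\epsilon=e$ the ciphertext $\zeta$ is supported on the affine subspace $Re+\operatorname{im}(B)$ and is uniform there. This subspace has exactly $p^k$ points, because $\operatorname{rank}(B)=k$ by \eqref{rank_B}, so $B:\mathbb F_p^k\to\operatorname{im}(B)$ is a bijection and translation by $Re$ preserves cardinality.

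Next I would substitute this distribution into the definition
\[
H(\zeta\mid\epsilon=e)=-\sum_{z\in\mathbb F_p^N}\Pr(\zeta=z\mid\epsilon=e)\log\Pr(\zeta=z\mid\epsilon=e),
\]
with the usual convention $0\log 0=0$. The terms with $z\notin Re+\operatorname{im}(B)$ vanish. Each of the $p^k$ remaining terms equals $-p^{-k}\log p^{-k}=p^{-k}\,k\log p$. Summing these $p^k$ equal terms gives $k\log p$.

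If one wants to avoid the explicit sum, there is an equivalent route: the uniform distribution on a set of size $M$ has entropy $\log M$, and here $M=p^k$. One should also note that the hypothesis that $\mu$ is independent of $\epsilon$, which is used in Theorem~\ref{Affine_Uniformity}, is what makes conditioning on $\epsilon=e$ leave $\mu$ uniform.

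There is no real obstacle. The only point needing care is that the support has exactly $p^k$ elements. This rests on the full column rank of $B$, which was established from $\kappa_1\neq0$. The base of the logarithm is arbitrary and only fixes the units.
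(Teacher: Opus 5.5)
Your proposal is correct and follows essentially the same route as the paper. Both arguments invoke Theorem~\ref{Affine_Uniformity} to obtain a uniform conditional distribution on the affine subspace $Re+\operatorname{im}(B)$ of cardinality $p^k$, and conclude that $H(\zeta\mid\epsilon=e)=\log p^k=k\log p$; your explicit entropy sum and the remark that the cardinality rests on $\operatorname{rank}(B)=k$ just spell out what the paper leaves implicit.
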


\begin{proof}
By Theorem~\ref{Affine_Uniformity}, conditioned on
$\epsilon=e$, the ciphertext is uniformly distributed on an affine
subspace of cardinality $p^k$. Hence
\[
H(\zeta\mid\epsilon=e)
=
\log(p^k)
=
k\log p.
\]
\end{proof}

The corollary shows that every independent salt symbol contributes
exactly one symbol of conditional randomness to the ciphertext.
Consequently, the encryption process preserves all randomness introduced
by the salt: conditioned on the plaintext, the ciphertext retains
precisely the $k$ degrees of freedom contributed by the salt.

\begin{remark}[Propagation of Randomness]
\label{Remark_Propagation}
Theorem~\ref{Affine_Uniformity} identifies the matrix $B$ as the
mechanism through which the fresh salt propagates randomness throughout
the ciphertext. By~\eqref{rank_B}, the matrix $B$ has full column rank,
so the $k$ independent degrees of randomness introduced by the salt
remain linearly independent after Dirichlet convolution.

For every coordinate projection
\[
P_S:\mathbb F_p^N\rightarrow\mathbb F_p^{|S|},
\qquad
S\subset\{1,\ldots,N\},
\]
the affine subspace
\[
Re+\operatorname{im}(B)
\]
is mapped onto
\[
P_S(Re)+\operatorname{im}(P_SB).
\]
Consequently, every projected ciphertext subvector is uniformly
distributed on its projected affine subspace. In particular, if
\[
\operatorname{rank}(P_SB)=|S|,
\]
then
\[
\operatorname{im}(P_SB)=\mathbb F_p^{|S|},
\]
so the projected ciphertext is uniformly distributed on
$\mathbb F_p^{|S|}$ and is therefore independent of the plaintext.
\end{remark}

The preceding results describe the masking properties of a single
ciphertext produced under a fixed secret key. The next question is
whether these properties persist when several independently resalted
ciphertexts are observed simultaneously and combined by an adversary.
The following subsection shows that affine masking is stable under
arbitrary linear combinations of such ciphertexts, demonstrating that
the propagation of randomness is an intrinsic structural property of the
encryption scheme rather than a feature of individual encryptions.

\subsection{Stability Under Linear Combinations}
\label{Linear_Combination_Stability}

Theorem~\ref{Affine_Uniformity} describes the distribution of a single
ciphertext. We now show that the same affine masking property is
preserved under arbitrary nontrivial linear combinations of ciphertexts
encrypted with the same secret key and independently generated salts.

\begin{corollary}[Stability Under Linear Combinations]
\label{Linear_Combination_Corollary}

Let
\(
\zeta^{(j)}
=
R\epsilon^{(j)}
+
B\mu^{(j)},
\;
j=1,\ldots,m,
\)
be ciphertexts produced according to
\eqref{Affine_Decomposition}, where the salt vectors
\(
\mu^{(1)},\ldots,\mu^{(m)}
\)
are independent and uniformly distributed on
\(
\mathbb F_p^k.
\)

Let
\(
a_1,\ldots,a_m\in\mathbb F_p
\)
be coefficients that are not all zero, and define
\[
\widetilde{\zeta}
=
\sum_{j=1}^{m}a_j\zeta^{(j)},
\qquad
\widetilde{\epsilon}
=
\sum_{j=1}^{m}a_j\epsilon^{(j)}.
\]
Then
\[
\widetilde{\zeta}
=
R\widetilde{\epsilon}
+
B\widetilde{\mu},
\]
where
\(
\widetilde{\mu}
=
\sum_{j=1}^{m}a_j\mu^{(j)}
\)
is uniformly distributed on
\(
\mathbb F_p^k.
\)
Consequently,
\(
\widetilde{\zeta}
\)
is uniformly distributed on the affine subspace
\[
R\widetilde{\epsilon}
+
\operatorname{im}(B).
\]

\end{corollary}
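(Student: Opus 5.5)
The plan is to reduce the corollary to Theorem~\ref{Affine_Uniformity} after three steps: verify the algebraic identity, show that $\widetilde{\mu}$ is uniform on $\mathbb F_p^k$, and then rerun the bijection argument. The identity follows from linearity of matrix multiplication over $\mathbb F_p$:
\[
\sum_{j=1}^m a_j\zeta^{(j)}=\sum_{j=1}^m a_j\bigl(R\epsilon^{(j)}+B\mu^{(j)}\bigr)=R\widetilde{\epsilon}+B\widetilde{\mu}.
\]
This requires only that the same key $\kappa$, and hence the same blocks $B$ and $R$ from \eqref{Block_Decomposition}, is used in every encryption.

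The main step is the uniformity of $\widetilde{\mu}$. Since the $a_j$ are not all zero, fix an index $j_0$ with $a_{j_0}\neq0$. Condition on all salts $\mu^{(j)}$ with $j\neq j_0$. The map $\mu^{(j_0)}\mapsto a_{j_0}\mu^{(j_0)}+c$, where $c=\sum_{j\neq j_0}a_j\mu^{(j)}$ is fixed, is a bijection of $\mathbb F_p^k$, because $a_{j_0}$ is invertible in the field $\mathbb F_p$. A bijection preserves the uniform distribution, and $\mu^{(j_0)}$ is independent of the other salts. Hence the conditional law of $\widetilde{\mu}$ is uniform for every value of the conditioning variables, and averaging shows that $\widetilde{\mu}$ is uniform on $\mathbb F_p^k$.

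The last step treats $\widetilde{\epsilon}$ as fixed. This means either conditioning on the plaintexts $\epsilon^{(1)},\ldots,\epsilon^{(m)}$, or noting that the salts are assumed independent of the plaintexts; I would state this explicitly, since the corollary leaves it implicit. Conditioned on the plaintexts, $\widetilde{\mu}$ is still uniform, because the argument above used only the salts. Then $B\widetilde{\mu}$ is uniform on $\operatorname{im}(B)$, since $\operatorname{rank}(B)=k$ by \eqref{rank_B} makes $B:\mathbb F_p^k\to\operatorname{im}(B)$ bijective. Translating by the fixed vector $R\widetilde{\epsilon}$ gives uniformity on $R\widetilde{\epsilon}+\operatorname{im}(B)$, exactly as in the proof of Theorem~\ref{Affine_Uniformity}.

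The only delicate point is the need for the nonzero coefficient $a_{j_0}$ and for field invertibility. If all $a_j$ vanished, $\widetilde{\mu}$ would be identically zero and the conclusion would fail, so the hypothesis that the coefficients are not all zero is essential.
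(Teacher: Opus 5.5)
Your proposal is correct and follows essentially the same route as the paper's proof: linearity, then conditioning on all salts except one with a nonzero coefficient so that $\widetilde{\mu}$ is uniform via a bijection of $\mathbb F_p^k$, then the argument of Theorem~\ref{Affine_Uniformity}. Your explicit remark that $\widetilde{\epsilon}$ must be held fixed, by conditioning on the plaintexts and using the independence of salts and plaintexts, makes precise a point the paper leaves implicit.
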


\begin{proof}

By linearity,
\[
\widetilde{\zeta}
=
R\widetilde{\epsilon}
+
B\widetilde{\mu},
\qquad
\text{where}\qquad
\widetilde{\mu}
=
\sum_{j=1}^{m}a_j\mu^{(j)}.
\]
Choose
\(r\)
with
\(a_r\neq0\).
Conditioning on all salt vectors except
\(
\mu^{(r)},
\)
we obtain
\(
\widetilde{\mu}
=
a_r\mu^{(r)}+c,
\)
where
\(c\)
is fixed.
Since multiplication by the nonzero scalar 
\(a_r\)
and translation by
\(c\)
are bijections of
\(
\mathbb F_p^k,
\)
the vector
\(
\widetilde{\mu}
\)
is uniformly distributed on
\(
\mathbb F_p^k.
\)
The final assertion therefore follows immediately from
Theorem~\ref{Affine_Uniformity}.
\end{proof}

\section{Plaintext-Substitution Cryptanalysis}
\label{section_attack}

The preceding sections established the information-theoretic masking
properties of Dirichlet convolution encryption. We now consider an active
attack model in which the adversary substitutes plaintexts and observes
the corresponding ciphertexts. Since the encryption transformation is
linear in the plaintext, one might naturally suspect that carefully
chosen probe vectors could reveal the underlying convolution operator and
ultimately the secret key.

The following analysis shows that plaintext substitution indeed exposes a
rich affine structure leading to a repeated-input key-recovery attack.
However, the attack possesses intrinsic arithmetic blind spots that
fundamentally limit its effectiveness. As we shall see, the resulting
ambiguity is sufficiently large to render this class of attacks
impractical, thereby providing practical security against plaintext
substitution attacks.

\subsection{Affine Structure of the Ciphertext Space}
\label{Affine_Structure}

Theorem~\ref{Affine_Uniformity} shows that, for every fixed plaintext,
the ciphertext is uniformly distributed on an affine subspace of
$\mathbb F_p^N$. The global geometry of these affine subspaces follows
from the block decomposition~\eqref{Block_Decomposition}.

Since $D_\kappa$ is invertible, its columns form a basis of
$\mathbb F_p^N$. Hence
\[
\mathbb F_p^N
=
\operatorname{im}(B)
\oplus
\operatorname{im}(R).
\]
Accordingly, every ciphertext admits a unique decomposition into a
masking component in $\operatorname{im}(B)$ and a plaintext component in
$\operatorname{im}(R)$.
The affine decomposition~\eqref{Affine_Decomposition} shows that every
plaintext determines the affine ciphertext space
\[
R\epsilon+\operatorname{im}(B).
\]
These affine spaces form a partition of $\mathbb F_p^N$ into
$p^{N-k}$ parallel affine subspaces, each containing $p^k$
ciphertexts. For every plaintext, the fresh salt selects a uniformly
distributed ciphertext from the corresponding affine space.

The direct-sum decomposition also admits a natural quotient-space
interpretation. Let
\[
q:\mathbb F_p^N
\longrightarrow
\mathbb F_p^N/\operatorname{im}(B)
\]
denote the canonical quotient map. Since
\(q(B\mu)=0\),
the affine decomposition yields
$
q(\zeta)
=
q(R\epsilon)$.
Thus the masking component vanishes upon passing to the quotient.
Accordingly, the induced map
\begin{equation}
\label{quotient_map}
\overline{R}
=
q\circ R:
\mathbb F_p^{N-k}
\longrightarrow
\mathbb F_p^N/\operatorname{im}(B)
\end{equation}
is an isomorphism. Consequently, the quotient retains precisely the
plaintext component of the ciphertext while discarding the masking
component introduced by the fresh salt.

The affine and quotient-space viewpoints are complementary. The former
describes how the fresh salt randomizes the ciphertext within an affine
space, while the latter isolates the deterministic plaintext component
by factoring out the masking subspace. Both viewpoints will be useful in
the cryptanalytic discussion that follows.

\subsection{Repeated-Input Key-Recovery Attack}
\label{Repeated_Input_Attack}

The affine structure developed in the preceding subsection naturally
suggests a chosen-plaintext attack based on repeated encryptions of a
fixed plaintext. By linearity, repeated encryptions of an arbitrary
plaintext reduce to repeated encryptions of the zero plaintext.
The resulting ciphertexts reveal the masking subspace
$\operatorname{im}(B)$, from which the attacker reconstructs the graph
matrix $A$. The graph identity (\ref{A_graph})
then yields a system of linear relations among the key coefficients.
Indeed, if
\[
\zeta^{(i)}
=
R\epsilon_0+B\mu^{(i)},
\qquad
i=1,2,\ldots,
\]
then
\[
\zeta^{(i)}-\zeta^{(1)}
=
B(\mu^{(i)}-\mu^{(1)}),
\]
so the constant plaintext contribution is eliminated. It therefore
suffices to analyze repeated encryptions of the zero plaintext.

Assume throughout that
\[\boxed{
2k\le N.}
\]
Under this hypothesis, the attacker reconstructs the secret key by the
following sequence of steps.

\paragraph{Step 1. Reconstruction of the graph matrix.}
Partition the masking matrix into the block form
\[
B=
\begin{pmatrix}
B_0\\
B_1
\end{pmatrix},
\]
where $B_0$ consists of the first $k$ rows and $B_1$ of the remaining
$N-k$ rows. Since $B_0$ is lower triangular with nonzero diagonal
entries, it is invertible. Consequently,
\[
\operatorname{im}(B)
=
\left\{
\begin{pmatrix}
u\\
Au
\end{pmatrix}
:
u\in\mathbb F_p^k
\right\},
\qquad
\text{where}\qquad
A=B_1B_0^{-1}.
\]

Suppose the attacker has obtained $k$ linearly independent ciphertexts
spanning $\operatorname{im}(B)$, and arrange them as the columns of a
matrix
\[
M=
\begin{pmatrix}
M_0\\
M_1
\end{pmatrix}.
\]
Since the columns of $M$ form another basis of
$\operatorname{im}(B)$, there exists an invertible matrix
$C\in GL_k(\mathbb F_p)$ such that
\[
M=BC.
\]
Hence
$
M_0=B_0C
$,
$
M_1=B_1C,
$
and therefore
\[
M_1M_0^{-1}
=
B_1CC^{-1}B_0^{-1}
=
B_1B_0^{-1}
=
A.
\]
Thus the graph matrix $A$ is recovered directly from any basis of the
masking subspace. The subsequent reconstruction of the secret key
therefore depends only on the recovered matrix $A$ and the arithmetic
structure of Dirichlet convolution matrices.
The identity
\begin{equation}\label{A_graph}
  B_1=AB_0
\end{equation}
encodes a system of linear relations among the key coefficients.

\paragraph{Step 2. Recovery of the initial key coefficients.}
By \eqref{quotient_map}, the induced map
$
\overline{R}$
is an isomorphism. Consequently, a chosen-plaintext attacker may
reconstruct the matrix $R$ modulo the masking subspace
$\operatorname{im}(B)$. Since the masking subspace has already been
recovered, the matrix $R$ itself is therefore known.

The columns of $R$ are precisely the last $N-k$ columns of the
Dirichlet convolution matrix $D_\kappa$. In particular, for every
\[
q\le
m=
\left\lfloor
\frac{N}{k+1}
\right\rfloor,
\]
the coefficient $\kappa_q$ appears explicitly among the entries of
$R$, since
\[
(k+1)q\le N.
\]
Thus the attacker directly recovers the initial coefficients
\[
\kappa_1,\ldots,\kappa_m.
\]
These coefficients provide the initial data required for the recursive
reconstruction of the remaining leading coefficients from the identity
(\ref{A_graph}).

\paragraph{Step 3. Recovery of the remaining coefficients.}
Throughout this section we (crucially) assume
$
N\ge 2k$.

The reconstruction of the key proceeds in two stages. We first recover the
remaining leading coefficients
\[
\kappa_{m+1},\ldots,\kappa_k,
\]
using suitably chosen higher-column equations of equation~(\ref{A_graph}).
Once all leading coefficients have been determined, the remaining tail
coefficients
\[
\kappa_{k+1},\ldots,\kappa_N
\]
follow immediately from the first-column equations of the same identity.
Thus the only nontrivial step is the recovery of the remaining leading
coefficients.

\begin{proposition}[Recovery of the Remaining Leading Coefficients]
\label{Remaining_Leading_Coefficients}

Suppose that the graph matrix $A$ is known and that the initial key
coefficients
\[
\kappa_1,\ldots,\kappa_m,
\qquad
m=\left\lfloor\frac{N}{k+1}\right\rfloor,
\]
have been recovered. Then the remaining leading coefficients
\[
\kappa_{m+1},\ldots,\kappa_k
\]
are uniquely determined.

\end{proposition}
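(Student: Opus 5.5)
The plan is to read off the entrywise content of the graph identity \eqref{A_graph} column by column, and then recover $\kappa_{m+1},\ldots,\kappa_k$ by strong induction on the index, using for each index a single well-chosen column.

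\textbf{Column equations.} By \eqref{D_matrix}, the masking matrix $B$ has entries $B_{ij}=\kappa_{i/j}$ when $j\mid i$ and $0$ otherwise, for $1\le i\le N$ and $1\le j\le k$. Comparing the $(i,j)$ entries of $B_1=AB_0$ for a row $i>k$ with $j\mid i$ gives
\[
\kappa_{i/j}
=
\sum_{\substack{l\le k\\ j\mid l}} A_{il}\,\kappa_{l/j}.
\]
Write $s=i/j$. The left side is then a single coefficient $\kappa_s$ with $k<sj\le N$. The right side is a known linear combination, with coefficients taken from the recovered matrix $A$, of coefficients $\kappa_t$ with $t=l/j\le\lfloor k/j\rfloor$. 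Since $sj>k$, every such $t$ satisfies $t\le k/j<s$. Hence column $j$ expresses $\kappa_s$ explicitly in terms of coefficients of strictly smaller index.

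\textbf{Induction.} We recover $\kappa_s$ for $s=m+1,\ldots,k$ in increasing order. The inductive hypothesis is that $\kappa_1,\ldots,\kappa_{s-1}$ are known; the base case consists of $\kappa_1,\ldots,\kappa_m$, which are given. For a given $s$ it suffices to exhibit a column $j\in\{1,\ldots,k\}$ with $k<sj\le N$. Once such a $j$ exists, we take $i=sj$ in the displayed equation. Its right-hand side involves only known quantities, so $\kappa_s$ is uniquely determined.

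\textbf{Existence of the column.} This is the only real obstacle, and it is exactly where the hypothesis $N\ge 2k$ enters. I would take $j=\lfloor k/s\rfloor+1$, so that $sj>k$ holds automatically. The upper bound follows from
\[
sj\le s\left(\frac{k}{s}+1\right)=k+s\le 2k\le N,
\]
using $s\le k$. It remains to check that $j\le k$, so that $j$ is a genuine column of $B$. Since $N\ge 2k\ge k+1$, we have $m\ge1$, and therefore $s\ge m+1\ge2$. This gives $j\le k/2+1\le k$ whenever $k\ge2$. The case $k=1$ is vacuous, because then $m\ge1=k$ and there is nothing to recover.

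This completes the induction and shows that all leading coefficients $\kappa_{m+1},\ldots,\kappa_k$ are uniquely determined by $A$ and the initial coefficients $\kappa_1,\ldots,\kappa_m$. It also indicates why the argument breaks down when $N<2k$: for some $s\le k$, no multiple $sj$ with $j\le k$ need fall into the observation window $(k,N]$.
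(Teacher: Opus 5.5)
Your proposal is correct and follows the paper's proof essentially step for step. It uses the same entrywise form of $B_1=AB_0$, the same choice $j=\lfloor k/s\rfloor+1$ with $k<sj\le k+s\le 2k\le N$, and the same induction driven by $\lfloor k/j\rfloor<s$; your explicit check that $j\le k$ (so the chosen column actually exists, with $k=1$ vacuous) is a detail the paper leaves implicit.
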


\begin{proof}
The Graph Identity~\eqref{A_graph}
is equivalent to the componentwise relations
\[
\mathbf1_{j\mid n}\kappa_{n/j}
=
\sum_{q=1}^{\lfloor k/j\rfloor}
a_{n-k,jq}\kappa_q,
\qquad
k<n\le N,\quad
1\le j\le k.
\]
Proceed by induction for
\[
r=m+1,\ldots,k.
\]
Assume that
$
\kappa_1,\ldots,\kappa_{r-1}$
have already been recovered.
Define
\[
j_r=\left\lfloor\frac{k}{r}\right\rfloor+1,
\qquad
n_r=j_rr.
\]
Since
$
j_r>k/r$,
we have
$
n_r>k$.
Moreover,
\[
n_r
=
r\left(\left\lfloor\frac{k}{r}\right\rfloor+1\right)
\le
r\left(\frac{k}{r}+1\right)
=
k+r
\le
2k
\le
N.
\]
Therefore,
\[
k<n_r\le N,
\]
and the corresponding component equation extracted from
(\ref{A_graph})
is
\[
\kappa_r
=
\sum_{q=1}^{\lfloor k/j_r\rfloor}
a_{n_r-k,j_rq}\kappa_q.
\]
Finally, since
\[
\left\lfloor\frac{k}{j_r}\right\rfloor<r,
\]
every coefficient appearing on the right-hand side has index
strictly smaller than \(r\), and is therefore already known by the
induction hypothesis.
Thus $\kappa_r$ is uniquely determined. Proceeding successively for
$
r=m+1,m+2,\ldots,k
$
recovers all remaining leading coefficients.
\end{proof}

\begin{remark}

The recovery of the leading coefficients uses equations from several
columns of the identity (\ref{A_graph})
rather than exclusively from its first column. 
\end{remark}

\subsection{Limitations of the Repeated-Input Attack}
\label{section_invisibility}

In the preceding subsection we analyzed the repeated-input attack under
the assumption \(N\ge 2k\), and showed that the leading key coefficients
(and hence the entire key) can be reconstructed uniquely. The
complementary regime 
\[\boxed{N<2k}
\] 
exhibits a qualitatively different
phenomenon. The repeated-input attack undergoes a sharp phase transition
at the critical threshold \(N=2k\). Above this threshold every leading
coefficient is recoverable. Below it, some leading coefficients become
intrinsically unrecoverable, with the resulting ambiguity governed
entirely by the arithmetic of divisibility.

\begin{definition}
The sets of \emph{visible} and \emph{invisible} leading indices are
defined by
\[
\mathcal V_{N,k}
=
\left\{
r\in\{1,\ldots,k\}:
\exists\,j\ge2
\text{ such that }
k<jr\le N
\right\},
\]
and
\[
\mathcal I_{N,k}
=
\{1,\ldots,k\}\setminus\mathcal V_{N,k}.
\]
Equivalently,
\[
r\in\mathcal V_{N,k}
\iff
\left\lfloor\frac Nr\right\rfloor
>
\left\lfloor\frac kr\right\rfloor,
\]
whereas
\[
r\in\mathcal I_{N,k}
\iff
\left\lfloor\frac Nr\right\rfloor
=
\left\lfloor\frac kr\right\rfloor.
\]
\end{definition}

The invisible indices are precisely those whose multiples never occur in the
observation window
$
(k,N]$.
Consequently, the following results show that the arithmetic notion of
invisibility completely characterizes the blind spots of the repeated-input
attack.

\begin{lemma}[Annihilation of invisible directions]
Let
\[
A=B_1(\kappa)B_0(\kappa)^{-1},
\]
and, for \(1\le r\le k\), let
\[
E_r=B_0(e_r),
\]
where \(e_r\in\mathbb F_p^k\) is the \(r\)-th standard coordinate
vector. If \(r\) is invisible, then
\[
AE_r=0.
\]
\end{lemma}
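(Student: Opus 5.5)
The plan is to reduce the statement to an identity in the full Dirichlet algebra $\mathcal L^{(p)}_N$ and then to use invisibility to discard the terms that the truncation to the first $k$ rows and columns throws away. Write $B_0(\beta)$ for the upper-left $k\times k$ block of $D_\beta$ and $B_1(\beta)$ for the rows $k<n\le N$ of its first $k$ columns, so that $B_0$ is linear in $\beta$. The $k\times k$ truncations $\beta\mapsto B_0(\beta)$ form a representation of the truncated algebra $\mathcal L^{(p)}_k$, since the product rule \eqref{mult_eps_gen_p} for indices $\le k$ only involves indices $\le k$. Let $\gamma=\kappa^{*-1}*e_r\in\mathcal L^{(p)}_N$. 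Then $B_0(\kappa)^{-1}E_r=B_0(\kappa^{*-1})B_0(e_r)=B_0(\gamma)$, and therefore $AE_r=B_1(\kappa)B_0(\gamma)$. From the recurrence \eqref{mult_eps_inv} and $(e_r)_n=\mathbf 1_{n=r}$, we have $\gamma_t=(\kappa^{*-1})_{t/r}\mathbf 1_{r\mid t}$, so $\gamma$ is supported on multiples of $r$.

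Next, I compute the $(n,j)$ entry of $B_1(\kappa)B_0(\gamma)$ for $k<n\le N$ and $1\le j\le k$:
\[
\bigl(AE_r\bigr)_{n,j}=\sum_{\substack{i\le k\\ j\mid i,\ i\mid n}}\kappa_{n/i}\,\gamma_{i/j}.
\]
The same sum taken over all $i$ with $j\mid i\mid n$ equals $(\kappa*\gamma*e_j)_n=(e_r*e_j)_n=\mathbf 1_{n=rj}$. I split into two cases.

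If $rj>k$, then column $j$ of $E_r$ vanishes, because its only possible nonzero entry sits in row $rj$. Hence column $j$ of $AE_r$ is zero.

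If $rj\le k$, then $n>k\ge rj$ forces $n\ne rj$, so the full sum is $0$. It then remains to show that the full sum and the truncated sum agree, i.e.\ that every extra term with $k<i$ vanishes.

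This last point is the main step, and it is exactly where invisibility enters. Suppose a term with $k<i$, $j\mid i$, $i\mid n$ is nonzero. Then $\gamma_{i/j}\neq0$ forces $r\mid i/j$, hence $r\mid i$. Also $k<i\le n\le N$. Writing $i=j'r$, we get $j'\ge2$ because $i>k\ge r$. This gives $k<j'r\le N$ with $j'\ge 2$, contradicting $r\in\mathcal I_{N,k}$. Thus all extra terms vanish, the truncated sum equals $\mathbf 1_{n=rj}=0$, and $AE_r=0$.

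The only delicate bookkeeping is confirming that the truncated product $B_0(\kappa^{*-1})B_0(e_r)$ really equals $B_0(\kappa^{*-1}*e_r)$. This follows from lower-triangularity of Dirichlet matrices: entries of a product in rows and columns $\le k$ only see indices $\le k$.
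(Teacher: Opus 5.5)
Your proof is correct, but it takes a different and longer route than the paper. The paper uses commutativity of the truncated Dirichlet matrices to write $AE_r=B_1(\kappa)E_rB_0(\kappa)^{-1}$. It then observes that $B_1(\kappa)E_r$ has entries $\kappa_{n/(rj)}\mathbf 1_{rj\mid n,\,rj\le k}$, which can be nonzero only if $r\mid n$ for some $n\in(k,N]$. Invisibility excludes this, so $B_1(\kappa)E_r=0$ already, before the inverse is applied. You instead absorb the inverse into $\gamma=\kappa^{*-1}*e_r$, which uses multiplicativity of truncation rather than commutativity. You then write $AE_r=B_1(\kappa)B_0(\gamma)$ and read this as the truncation defect of the full identity $D_\kappa D_\gamma=D_{e_r}$, which invisibility forces to vanish. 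This gives you, in your row indexing, the expression $(AE_r)_{n,j}=\mathbf 1_{n=rj}-\sum_{i>k,\ j\mid i\mid n}\kappa_{n/i}\gamma_{i/j}$, valid for every $r$, visible or not. For the lemma itself, however, the comparison with the full sum and the case split are unnecessary. The chain you use for the extra terms, $\gamma_{i/j}\neq0\Rightarrow r\mid i\mid n$, applies verbatim to the terms with $i\le k$. So every term of the truncated sum vanishes directly, since it would produce a multiple $n$ of $r$ in $(k,N]$. That one-line observation is exactly the paper's argument, with $B_0(\gamma)$ in place of $E_r$.
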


\begin{proof}
Since \(B_0(\kappa)\) and \(E_r\) belong to the commutative algebra of
truncated Dirichlet-convolution matrices, they commute. Therefore
\[
AE_r
=
B_1(\kappa)B_0(\kappa)^{-1}E_r
=
B_1(\kappa)E_rB_0(\kappa)^{-1}.
\]
It remains to prove that
$
B_1(\kappa)E_r=0$.
For \(k<n\le N\) and \(1\le j\le k\), 
\[
\bigl(B_1(\kappa)E_r\bigr)_{n-k,j}
=
\sum_{s=1}^{k}
\bigl(B_1(\kappa)\bigr)_{n-k,s}(E_r)_{s,j}.
\]
Since
$
(E_r)_{s,j}\neq0
$
only when \(s=rj\), this becomes
\[
\bigl(B_1(\kappa)E_r\bigr)_{n-k,j}
=
\begin{cases}
\kappa_{\,n/(rj)},
   & rj\mid n \text{ and } rj\le k,\\[2mm]
0,&\text{otherwise}.
\end{cases}
\]
In particular, a nonzero entry is possible only if
$
r\mid n$.
Because \(r\) is invisible, no multiple of \(r\) lies in the observation
range
$
k<n\le N$.
Consequently,
$
B_1(\kappa)E_r=0$,
and therefore
$
AE_r=0$.
\end{proof}

Next, we have the following characterization theorem: 

\begin{theorem}[Recoverability and Ambiguity]
\label{thm:recoverability-ambiguity}
Assume that the graph matrix $A$ has been recovered from a repeated-input
attack, and normalize the key by fixing $\kappa_1$.
Then
\begin{enumerate}
\item
every visible coefficient
$
\kappa_r$,
$
r\in\mathcal V_{N,k},
$
is uniquely determined;

\item
every invisible coefficient
$
\kappa_r$,
$
r\in\mathcal I_{N,k}$,
may be chosen arbitrarily and independently without changing the graph
matrix \(A\).
\end{enumerate}
Consequently, the family of normalized keys producing the same graph
matrix has dimension
$
|\mathcal I_{N,k}|$.
\end{theorem}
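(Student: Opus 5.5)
The plan is to work directly with the graph identity \eqref{A_graph}, $B_1(\kappa)=AB_0(\kappa)$, together with the componentwise form recorded in the proof of Proposition~\ref{Remaining_Leading_Coefficients}:
\[
\mathbf1_{j\mid n}\kappa_{n/j}=\sum_{q=1}^{\lfloor k/j\rfloor}a_{n-k,jq}\kappa_q,\qquad k<n\le N,\ 1\le j\le k.
\]
Since $\kappa_1\neq0$ is fixed, $B_0(\kappa)$ is invertible. Hence $A$ is the unique matrix satisfying this identity for the given $\kappa$, and it suffices to decide which keys satisfy the identity with the same $A$.

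First I would extract a structural consequence of the Annihilation Lemma. Let $r$ be invisible. Column $j$ of $E_r=B_0(e_r)$ equals $e_{rj}$ when $rj\le k$ and is zero otherwise. Hence $AE_r=0$ gives $Ae_{rj}=0$ for all $rj\le k$, so every column of $A$ whose index is a multiple of an invisible index vanishes. The index $rj$ is itself invisible, which is consistent, since $\lfloor N/(rj)\rfloor=\lfloor k/(rj)\rfloor$ follows from the corresponding equality for $r$.

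For part~1, let $\kappa,\kappa'$ share $\kappa_1$ and produce the same $A$. I would show $\kappa_r=\kappa'_r$ for visible $r$ by induction on $r$. For a visible $r>1$, choose $j\ge2$ with $k<jr\le N$ and use the equation with $n=jr$:
\[
\kappa_r=\sum_{q\le\lfloor k/j\rfloor}a_{jr-k,jq}\kappa_q .
\]
Here $\lfloor k/j\rfloor<r$, since $jr>k$. For invisible $q$ the coefficient $a_{jr-k,jq}$ vanishes by the previous step, so the right-hand side involves only visible $\kappa_q$ with $q<r$. These agree for $\kappa$ and $\kappa'$ by the induction hypothesis, hence $\kappa_r=\kappa'_r$. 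The tail coefficients $\kappa_n$, $k<n\le N$, are given by the $j=1$ equations $\kappa_n=\sum_{q\le k}a_{n-k,q}\kappa_q$, and only visible $q$ contribute there as well. So the tail is determined too.

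For part~2, fix an invisible $r$ and consider $\kappa'=\kappa+te_r$. The entries of $B_1$ are $\kappa_{n/j}$ with $n\in(k,N]$ and $j\mid n$. The coefficient $\kappa_r$ with $r\le k$ could occur only when $n=jr$ with $j\ge2$ and $k<jr\le N$, which invisibility excludes. Hence $B_1(\kappa')=B_1(\kappa)$. Moreover $B_0(\kappa')=B_0(\kappa)+tE_r$, so
\[
AB_0(\kappa')=B_1(\kappa)+tAE_r=B_1(\kappa'),
\]
by the Lemma. By uniqueness, $\kappa'$ yields the same $A$. The Lemma holds for every key with $\kappa_1\ne0$, so I can iterate over all invisible indices with arbitrary, independent values. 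Combined with part~1, the normalized keys producing a given $A$ are then parametrized bijectively by $(\kappa_r)_{r\in\mathcal I_{N,k}}\in\mathbb F_p^{|\mathcal I_{N,k}|}$, with all other coordinates fixed.

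I expect the main obstacle to be the subtlety in part~1. The recursion of Proposition~\ref{Remaining_Leading_Coefficients} might a priori feed invisible, and hence arbitrary, coefficients into the formula for visible ones. The column-vanishing consequence of the Lemma is exactly what rules this out, so that step has to be stated carefully.
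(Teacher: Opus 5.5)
Your proposal is correct and follows essentially the same route as the paper. Invisible directions leave $A$ unchanged because of the Annihilation Lemma together with the absence of invisible coefficients from $B_1$, and visible coefficients are pinned down by induction on the $(jr-k,j)$ entries of $B_1=AB_0$, with invisible terms killed by the Lemma and triangularity coming from $\lfloor k/j\rfloor<r$. The differences are cosmetic: you compare two keys directly instead of working in $\ker\mathcal L_A$, you read $(AE_q)_{jr-k,j}=a_{jr-k,jq}$ off directly rather than via $B_1(\kappa)E_qB_0(\kappa)^{-1}$, and you handle the tail coefficients explicitly, which the paper's proof leaves implicit.
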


\begin{proof}
We work in the space of leading-coordinate perturbations
\(\mathbb F_p^k\). Let
\[
\delta=(\delta_1,\ldots,\delta_k)
=
(\widetilde\kappa_1-\kappa_1,\ldots,
\widetilde\kappa_k-\kappa_k),
\]
where \(\widetilde\kappa\) is another key producing the same graph
matrix \(A\). Let \(B_0(\delta)\) denote the corresponding truncated
Dirichlet-convolution matrix, and define the linear compatibility
operator
\[
\mathcal L_A(\delta)
=
B_1(\delta)-AB_0(\delta).
\]
Since
\(B_1(\widetilde\kappa)=AB_0(\widetilde\kappa)\)
and
\(B_1(\kappa)=AB_0(\kappa)\),
linearity gives
\[
\mathcal L_A(\delta)
=
B_1(\delta)-AB_0(\delta)
=
B_1(\widetilde\kappa)-B_1(\kappa)-AB_0(\delta)
=
0.
\]
Thus two normalized keys produce the same graph matrix if and only if
their difference belongs to \(\ker\mathcal L_A\).
We first identify a family of kernel vectors. Let
\(r\in\mathcal I_{N,k}\). Then
\(B_1(e_r)=0\), since no multiple of \(r\) lies in the observation range
\((k,N]\). By the preceding lemma,
\(AE_r=0\), where \(E_r=B_0(e_r)\). Hence
\[
\mathcal L_A(e_r)
=
B_1(e_r)-AE_r
=
0,
\]
so that
\[
\operatorname{span}
\{e_r:r\in\mathcal I_{N,k}\}
\subseteq
\ker\mathcal L_A.
\tag{1}
\label{eq:first-kernel-inclusion}
\]

It remains to prove the reverse inclusion. Let
\(\delta\in\ker\mathcal L_A\), with \(\delta_1=0\). We prove, by
induction on \(r\), that
\[
\delta_r=0
\qquad
\text{for every }r\in\mathcal V_{N,k}.
\]

Let \(r\in\mathcal V_{N,k}\), and choose \(j\ge2\) such that
\(k<jr\le N\). Since \(\mathcal L_A(\delta)=0\), we have
\(B_1(\delta)=AB_0(\delta)\). Taking the \((jr-k,j)\)-entry gives
\[
\delta_r
=
\bigl(AB_0(\delta)\bigr)_{jr-k,j}.
\]
Writing
\[
B_0(\delta)
=
\sum_{q=1}^k \delta_qE_q,
\]
we obtain
\[
\delta_r
=
\sum_{q=1}^k
\delta_q\,(AE_q)_{jr-k,j}.
\]

We now show that
\[
(AE_q)_{jr-k,j}=0
\qquad
\text{whenever }q\ge r.
\]
Since
\[
A
=
B_1(\kappa)B_0(\kappa)^{-1},
\]
and the Dirichlet-convolution matrices commute, we have
\[
AE_q
=
B_1(\kappa)E_qB_0(\kappa)^{-1}.
\]
Consequently,
\[
(AE_q)_{jr-k,j}
=
\sum_{s=1}^{k}
(B_1(\kappa)E_q)_{jr-k,s}
(B_0(\kappa)^{-1})_{s,j}.
\]
For a summand on the right-hand side to be nonzero, one must have
\(sq\le k\). Moreover,
\[
(B_0(\kappa)^{-1})_{s,j}\neq0
\]
implies \(j\mid s\). Writing \(s=jd\), we obtain
\[
jdq\le k<jr.
\]
Hence \(dq<r\), and therefore \(q<r\). This proves the asserted
vanishing for \(q\ge r\).

If \(q\in\mathcal I_{N,k}\), then \(AE_q=0\) by the preceding lemma.
It follows that
\[
\delta_r
=
\sum_{\substack{q<r\\ q\in\mathcal V_{N,k}}}
\delta_q\,(AE_q)_{jr-k,j}.
\tag{2}
\label{eq:visible-induction}
\]
By the induction hypothesis, \(\delta_q=0\) for every visible
\(q<r\). Therefore \(\delta_r=0\). Proceeding upward through the
visible indices yields
\[
\delta_r=0
\qquad
\text{for every }r\in\mathcal V_{N,k}.
\]
Together with
\eqref{eq:first-kernel-inclusion},
this proves that
\[
\ker\mathcal L_A
=
\operatorname{span}
\{e_r:r\in\mathcal I_{N,k}\}.
\]
Therefore every normalized leading vector compatible with \(A\) is of
the form
\[
\widetilde\kappa^{(0)}
=
\kappa^{(0)}
+
\sum_{r\in\mathcal I_{N,k}}
c_r e_r,
\qquad
c_r\in\mathbb F_p,
\]
where
\[
\kappa^{(0)}
=
(\kappa_1,\ldots,\kappa_k).
\]
It follows that the visible leading coefficients are uniquely
determined, while the invisible leading coefficients may be chosen
arbitrarily and independently. Consequently, the family of normalized
keys producing the same graph matrix has dimension
$
|\mathcal I_{N,k}|.
$
\end{proof}

\begin{remark}
By the preceding theorem, the repeated-input ambiguity is confined to
the invisible leading coefficients. These coefficients constitute the
free parameters of the reconstruction problem: every assignment of
their values determines a unique normalized key compatible with the
given graph matrix. The visible leading coefficients and the remaining
tail coefficients are then uniquely determined by the reconstruction
equations. Consequently, the family of normalized keys compatible with
a given graph matrix is an affine space of dimension
\( |\mathcal I_{N,k}| \).
\end{remark}

The following example provides a simple illustration of the phase
transition described above. It demonstrates that, once $2k>N$, the
repeated-input ambiguity may generate a space of invisible coefficients
of significant dimension.

\begin{example}
Consider the parameters
\[
N=550,
\qquad
k=450.
\]
The observation window is
\[
(k,N]=(450,550].
\]
A leading index $r$ is visible precisely when one of its multiples lies in
this interval. Equivalently,
\[
\left\lfloor\frac Nr\right\rfloor
>
\left\lfloor\frac kr\right\rfloor.
\]
Consequently, the invisible indices exhibit a band structure reflecting
the arithmetic nature of invisibility: each band corresponds to a
distinct value of $j$ for which
\[
jr\le k<N<(j+1)r.
\]
The invisible indices are therefore
\[
111\le r\le112 \qquad (j=4),
\]
\[
138\le r\le150 \qquad (j=3),
\]
\[
184\le r\le225 \qquad (j=2),
\]
and
\[
276\le r\le450 \qquad (j=1).
\]
Hence there are
$
232
$
invisible coefficients and
$
217
$
visible coefficients.

The invisible coefficients and the associated ambiguity space are easily
explored numerically, reflecting the constructive nature of
Theorem~\ref{thm:recoverability-ambiguity}. Rather than merely asserting
the existence of ambiguity, the theorem identifies its precise location
and provides an explicit basis for the corresponding nullspace.

\end{example}

\section{Efficient Algorithms for Dirichlet Convolution}
\label{Section_fast_algorithm}

In this section we show that Dirichlet convolution admits efficient
algorithms for both multiplication and inversion. These algorithms
exploit the sparse structure of the Dirichlet matrices introduced in
Section~\ref{Section_DCM}. For a sequence
$a=[a_1,\ldots,a_N]$, let $D_a$ denote the corresponding Dirichlet
matrix defined by \eqref{D_matrix}. Matrix multiplication realizes the
Dirichlet algebra:
\[
D_aD_b=D_{a*b}.
\]

The number of nonzero entries in the $n$th row of $D_a$ is exactly the
number of divisors of $n$, denoted $d(n)$. Consequently, the total
number of nonzero entries is
\[
\sum_{n\le N} d(n).
\]
Dirichlet's asymptotic formula (see \cite[Theorem~3.3]{Apostol2})
states that
\begin{equation}\label{Dir_estimate}
\sum_{n\le N} d(n)
=
N\log N+(2\gamma-1)N+O(\sqrt N),
\end{equation}
where $\gamma=0.57721566\ldots$ is the Euler--Mascheroni constant.
Hence every Dirichlet matrix contains only $O(N\log N)$ nonzero
entries. Consequently, Dirichlet convolution can be evaluated using
$O(N\log N)$ arithmetic operations.

Previous work has shown how to eliminate operations involving entries
that are necessarily zero. The resulting algorithms apply directly
over finite fields and are included here for completeness.

\medskip

The convolution $c = a*b$ can be computed efficiently by the following algorithm:
\paragraph{Fast multiplication.}\label{Dirichlet_fast_mult}
Input vectors $a$ and $b$. Initiate  vector $c = [0,0,\ldots 0]$ ($N$ zeros). Next,
\[
\begin{array}{lll}
\mbox{for } k = 1:N \\
\quad\quad    c(k:k:N) = c(k:k:N) + b(k)\cdot a(1:\lfloor N/k \rfloor );\\
\mbox{end}
\end{array}
\]
(Note: operations $+$ and $\cdot$ are performed in the underlying finite field.)
When the for-loop is completed $c= a*b$.

\medskip

Here,  $(k:k:N)$ denotes the set of indices $(k, 2k,\ldots ,  \lfloor N/k\rfloor \, k)$. The algorithm requires $O(N\log N)$ arithmetic operations.
 Note that because of commutativity the roles of $a$  and $b$ can be reversed.

\medskip

The inverse $b = a^{*-1}$ can be computed recursively by exploiting the triangular structure of the matrix representation.

\paragraph{Fast inversion.}\label{alg_D_inverse}
Input vector $a$ with $a_1\neq 0$. Initiate $b = [a_1^{-1},0,0,\ldots 0]$ ($N-1$ zeros). Next,
\begin{equation}\label{Dirichlet_inverse}
\begin{array}{lll}
\mbox{for } k = 1:\lfloor N/2\rfloor  \\
\quad\quad    ind = k:k:N;
\quad    ind(1) = []; \\
\quad\quad (\mbox{the first index in the series is dropped}) \\
\mbox{end}
\end{array}
\end{equation}
(Again, operations $+$ and $\cdot$ are performed in the underlying finite field.) 
When the for-loop is completed, $b$ is the Dirichlet inverse of $a$, i.e. $a*b = e$.

\medskip 

The computational complexity of the inversion algorithm is likewise $O(N\log N)$.

\begin{remark}
Both algorithms require only the storage of the input and output vectors; no auxiliary matrices need be formed. In particular, the preallocated space for $b$ remains fixed throughout the computation. Such algorithms are sometimes referred to as lifting schemes. This memory efficiency becomes especially advantageous for large problem sizes.
\end{remark}

\section{Practical Considerations}
\label{sec:practical_considerations}

Advances in computational methods, including quantum algorithms
\cite{NielsenChuang}, motivate the search for cryptographic mechanisms
whose security relies, whenever possible, on information-theoretic
rather than computational principles. The security mechanism developed
here follows this philosophy by eliminating exploitable statistical
dependence between the plaintext and the ciphertext. Once such
dependence has been shown to vanish, advances in classical, quantum, or
AI-assisted algorithms cannot exploit statistical information that is
not present.

The generalized Dirichlet convolution developed in this paper should be
viewed as a cryptographic primitive rather than a complete communication
protocol. Its algebraic structure---including the underlying group
operation, efficient inversion, and composability of independent
encryption stages---naturally supports a variety of higher-level
protocols and deployment architectures. While different applications
place different requirements on these surrounding protocols, the
underlying primitive remains unchanged.

The examples below illustrate the practicality and flexibility of this
framework rather than prescribe specific implementations. They
demonstrate how the algebraic properties established in the preceding
sections translate naturally into deployment strategies with attractive
security and computational characteristics.

\subsection{Storage versus Communication}
\label{subsec:storage_communication}

The generalized Dirichlet-convolution primitive is applicable to both
persistent storage and secure communication, although the surrounding
protocols differ substantially. Storage applications emphasize
long-term confidentiality and efficient processing of large data
objects, whereas communication protocols additionally require
synchronization, authentication, state evolution, and robustness
against packet loss or replay attacks.

For storage, a data object may be encrypted together with fresh salt
and retained in encrypted form for an extended period. Large or
multidimensional data sets may be partitioned into frames, rows,
columns, or other natural units and processed independently or through
multiple encryption stages.

Communication protocols maintain a shared cryptographic state between
the communicating parties. This state may evolve deterministically or
incorporate fresh randomness generated during successive exchanges.
Thus, while storage and communication employ the same underlying
Dirichlet-convolution primitive, they naturally give rise to distinct
protocol architectures.

\subsection{Primitive versus Protocol}
\label{subsec:primitive_protocol}

A distinctive feature of the construction is that independent
encryption stages compose naturally. If
\[
\zeta
   =\kappa^{(2)}
      *\bigl(\kappa^{(1)}*\alpha\bigr),
\]
then associativity and commutativity imply
\[
\zeta
   =\bigl(\kappa^{(2)}*\kappa^{(1)}\bigr)
      *\alpha.
\]
Thus independently generated key elements may be combined without
changing the mathematical form of the encryption operation. This
property naturally supports layered encryption, distributed key
management, and evolving encryption states.

A distinctive feature of the present construction is that composition
requires no separate cryptanalytic treatment: by associativity,
successive Dirichlet encryption stages reduce to a single encryption
with the composite key. Thus the resulting transformation remains
within the same algebraic family and is subject to the same analysis
developed above. This should be distinguished from composition at the
protocol level, where security of the constituent protocols does not
in general imply security of their composition and requires additional
considerations; see, for example, \cite{BroadbentKarvonen}.

\subsection{Deployment Architectures}
\label{subsec:deployment_architectures}

The nonlocal character of Dirichlet convolution naturally supports a
variety of deployment architectures. For multidimensional data, one
may apply Dirichlet encryption successively to rows and columns, using
fresh salt variables and, where appropriate, independent keys at each
stage. The row transform mixes information within individual rows,
while the subsequent column transform propagates these dependencies
across the second dimension.

This two-stage construction substantially changes the geometry of a
potential attack. Recovering the action of either transform generally
requires accounting for the other. The same principle extends
immediately to higher-dimensional arrays, collections of frames, and
other naturally partitioned data structures.

A second architecture interleaves Dirichlet convolution with a secret
permutation:
\[
\bm{\alpha}
 \xrightarrow{\;D_{\bm{\kappa}^{(1)}}\;}
\bm{\eta}
 \xrightarrow{\;\text{Secret permutation}\;}
\bm{\xi}
 \xrightarrow{\;D_{\bm{\kappa}^{(2)}}\;}
\bm{\zeta}.
\]
The first convolution suppresses readily observable statistical
features of the input, the permutation conceals the divisor geometry,
and the second convolution performs a further nonlocal transformation.
Since an arbitrary permutation cannot, in general, be absorbed into a
change of the convolution key, the two operations provide genuinely
complementary mechanisms of diffusion.

More generally, recovering hidden transformations interleaved with
structured cryptographic maps may lead to nontrivial equivalence
problems; related questions arise, for example, in the study of
extended-affine equivalence of vectorial Boolean functions
\cite{CanteautCouvreurPerrin}.

The number and order of stages, the choice of keys and salts, and the
partition of the data may all be adapted to the security and
performance requirements of a particular application.

\subsection{Dynamic Key Evolution}
\label{subsec:dynamic_keys}

The group structure naturally supports dynamic evolution of the
encryption state. If the key at stage \(j\) is \(\kappa_j\), one may
define
\[
\kappa_{j+1}
   =\rho_j*\kappa_j,
\]
where \(\rho_j\) is a fresh update element generated or exchanged
according to the surrounding protocol. Encryption at the next stage
then proceeds using \(\kappa_{j+1}\).

Associativity and commutativity permit successive updates to be
composed efficiently and, in distributed settings, contributed by
multiple parties. This naturally suggests communication protocols
based on synchronized evolution of the encryption state, including
constructions resembling a cryptographic ratchet.

The design and analysis of such protocols require additional
considerations, including authentication, synchronization, recovery
from state compromise, and fault tolerance, and therefore lie beyond
the scope of the present paper. Nevertheless, the algebraic
composability of the primitive makes dynamic key evolution a natural
extension of the present framework.

\subsection{Prototype Implementation and Performance}
\label{subsec:prototype_performance}

Prototype implementations have been developed in Python, Rust, and
CUDA to assess the practical feasibility of generalized Dirichlet
encryption. Representative performance measurements are summarized in
Table~\ref{tab:performance}. Among the current implementations, Rust
consistently achieved the highest throughput. These measurements
demonstrate practical feasibility rather than platform-independent
performance benchmarks.

\begin{table}[t]
\centering
\caption{Representative encryption times for current prototype implementations of generalized Dirichlet encryption. The reported values are intended to illustrate practical feasibility rather than to provide platform-independent performance benchmarks.}
\label{tab:performance}

\begin{tabular}{lccc}
\hline
\textbf{Test case} & \textbf{Python} & \textbf{Rust} & \textbf{CUDA} \\
\hline
512$\times$512 grayscale image  & 80.03 ms & \textbf{23.00 ms} & 56.90 ms \\
1024$\times$1024 grayscale image & 153.44 ms & \textbf{107.00 ms} & 140.25 ms \\
\hline
673$\times$1102 DICOM$^\dagger$ & 22.73 ms & \textbf{3.45 ms}$^\ast$ & 4.23 ms \\
\hline
\end{tabular}

\vspace{0.3em}

\footnotesize
$^\dagger$ 28-frame, 3-channel medical image.\\
$^\ast$ 12-core multithreaded Rust implementation.
\end{table}

Figure~\ref{fig:medical_example} illustrates generalized Dirichlet
encryption applied to representative medical-image data. Encryption
removes visually recognizable anatomical structure while
substantially flattening the empirical intensity histogram. Exact
recovery is obtained upon decryption. The histograms provide an
intuitive illustration of the masking properties established
theoretically in the preceding sections.

\begin{figure}[t]
\centering

\begin{minipage}[t]{0.47\textwidth}
\centering
\includegraphics[width=\linewidth]{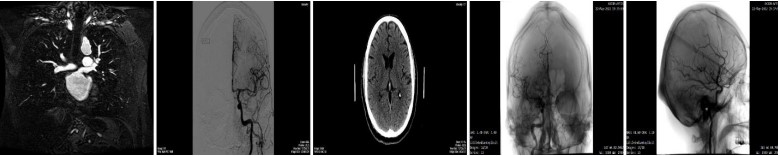}

{\small (a) Original images}
\end{minipage}
\hfill
\begin{minipage}[t]{0.47\textwidth}
\centering
\includegraphics[width=\linewidth]{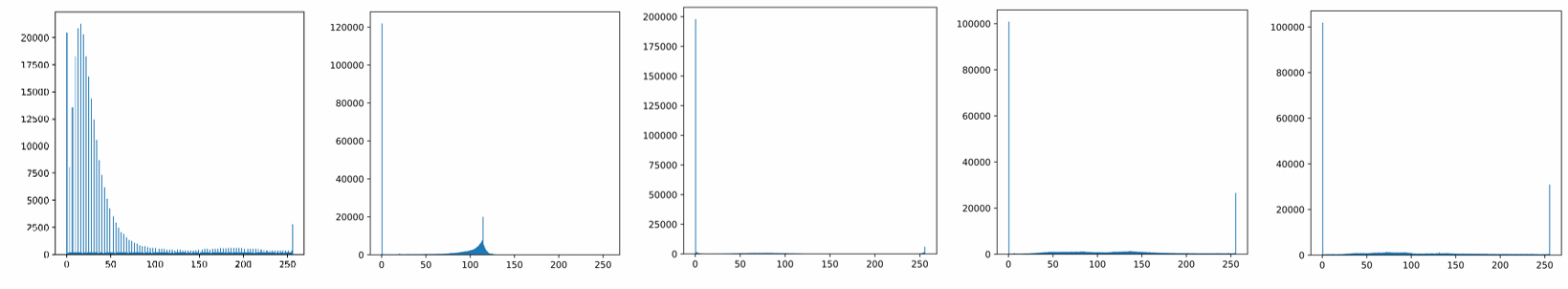}

{\small (b) Original histograms}
\end{minipage}

\vspace{0.5em}

\begin{minipage}[t]{0.47\textwidth}
\centering
\includegraphics[width=\linewidth]{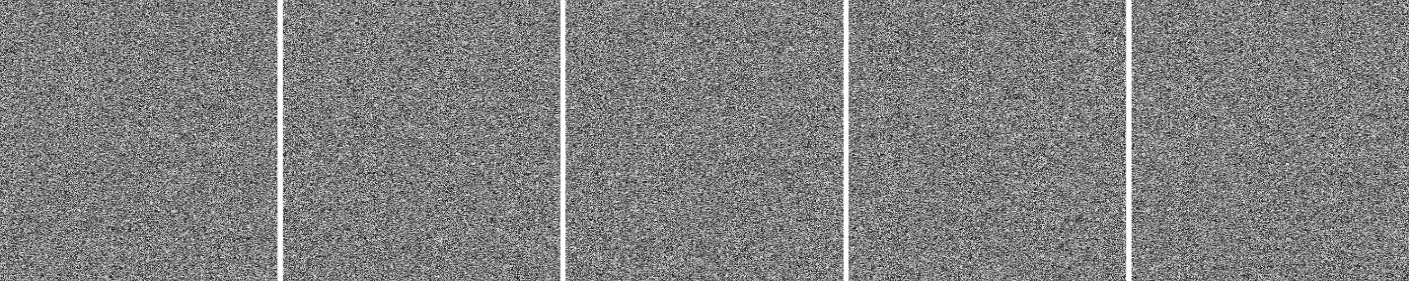}

{\small (c) Encrypted images}
\end{minipage}
\hfill
\begin{minipage}[t]{0.47\textwidth}
\centering
\includegraphics[width=\linewidth]{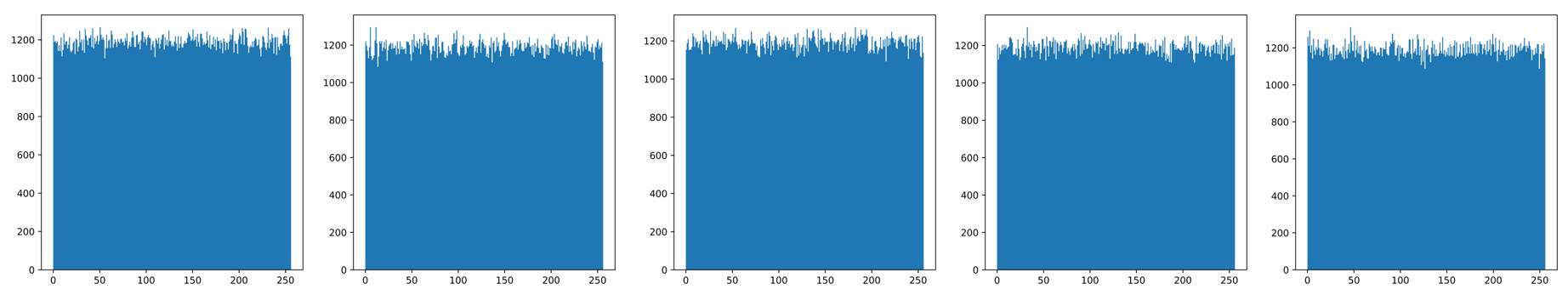}

{\small (d) Histograms of encrypted images}
\end{minipage}

\caption{
Representative prototype implementation of generalized Dirichlet
encryption applied to medical-image data. The encrypted images
exhibit no recognizable features, while their corresponding
ciphertext histograms are flat. Exact recovery of the original images
is obtained upon decryption. The histograms provide an intuitive
illustration of the masking properties established theoretically in
the preceding sections.
}
\label{fig:medical_example}
\end{figure}

These prototype implementations demonstrate that the proposed
algebraic framework admits efficient software realization while
preserving exact invertibility. Further optimization, hardware
acceleration, and systematic benchmarking remain natural directions
for future work.

\ifanonymous
\else
\section*{Acknowledgment}
The authors gratefully acknowledge the support of The Royal University Hospital Foundation.
Francis Bui gratefully acknowledges the support of the Natural Sciences and Engineering Research Council of Canada (NSERC).
\fi

\appendix

\section{The Exceptional Binary Case}
\label{App_binary_case}

The perfect-masking argument developed in the main text relies on
multiplication by uniformly distributed nonzero elements of
\(\mathbb F_p\). The binary case is exceptional: products of independent
binary variables are biased toward zero, and the resulting mask need
not be uniformly distributed. We record here the corresponding
coordinate-wise mutual information.

Let
\[
\zeta
=
\kappa*\epsilon
\pmod 2,
\]
where \(\kappa_1=1\), and suppose that the remaining key and plaintext
coordinates are independent and uniformly distributed over
\(\mathbb F_2\). For \(n>1\),
\[
\zeta_n
=
\epsilon_n+\gamma_n
\pmod 2,
\qquad
\gamma_n
=
\sum_{\substack{d\mid n\\ d>1}}
\kappa_d\epsilon_{n/d}
\pmod 2.
\]

Let \(d(n)\) denote the number of positive divisors of \(n\). Since
\(\gamma_n\) is the parity of \(d(n)-1\) independent Bernoulli products,
\[
P(\gamma_n=0)
=
\frac12\left(1+2^{-d(n)+1}\right),
\qquad
P(\gamma_n=1)
=
\frac12\left(1-2^{-d(n)+1}\right).
\]
Consequently,
\begin{equation}
\label{Izeta_binary}
I(\zeta_n;\epsilon_n)
=
\frac{1-2^{-d(n)+1}}{2}
\log_2\!\left(1-2^{-d(n)+1}\right)
+
\frac{1+2^{-d(n)+1}}{2}
\log_2\!\left(1+2^{-d(n)+1}\right).
\end{equation}

The leakage is largest when \(n\) is prime. In that case,
\(d(n)=2\) and
\[
I(\zeta_n;\epsilon_n)\approx 0.1887
\]
bits. For \(n=q^2\), with \(q\) prime, the value decreases to
approximately \(0.0456\) bits, while for \(n=qr\), with distinct
primes \(q\) and \(r\), it is approximately \(0.0113\) bits. More
generally, the mutual information decays rapidly as \(d(n)\) increases.

Thus the binary construction exhibits small but nonzero
coordinate-wise leakage. Any use of a fixed binary key therefore
requires a separate analysis of cumulative leakage under key reuse.
This contrasts with the \(p>2\) setting of the main text, where the
stated hypotheses yield exact coordinate-wise masking. The exceptional
behavior of the binary field thus provides additional motivation for
developing the theory over general finite fields \(\mathbb F_p\),
\(p>2\), where perfect masking becomes possible.

\end{document}